%
%
%
%
%
%
%
\documentclass[%
reprint,
 amsmath,amssymb,
 aps,
pra,
floatfix,
]{revtex4-2}

\usepackage{mathtools}
\usepackage{graphicx}
\usepackage{dcolumn}
\usepackage{bm}
\usepackage{physics}
\usepackage{tikz}
\usepackage{standalone}
\usepackage{verbatim}
\usepackage[caption=false]{subfig}
\pgfdeclarelayer{bg}
\pgfsetlayers{bg,main}

\usepackage{amsthm}
\usepackage{thmtools, thm-restate}
\usepackage{natbib}


\declaretheoremstyle[
    bodyfont=\normalfont\itshape
]{mystyle}

\declaretheorem[name=Theorem, style=mystyle]{theorem}
\declaretheorem[name=Observation, sibling=theorem, style=mystyle]{observation}
\declaretheorem[name=Corollary, sibling=theorem, style=mystyle]{corollary}
\declaretheorem[name=Lemma, sibling=theorem, style=mystyle]{lemma}












\begin{document}

\preprint{APS/123-QED}

\title{Quantum Grid States and Hybrid Graphs}

\author
{Biswash Ghimire,$^{1\ast}$ Thomas Wagner,$^{1}$ Hermann Kampermann,$^{1}$ Dagmar Bruß$^{1}$ \\
\normalsize{$^{1}$ Institute for Theoretical Physics III, Heinrich-Heine-Universität Düsseldorf, D-40225 Düsseldorf, Germany}\\
\normalsize{$^\ast$Correspondence: biswash.ghimire@hhu.de}
}

\date{\today}

\begin{abstract}
Using the signed Laplacian matrix, and weighted and hybrid graphs, we present additional ways to interpret graphs as grid states.
Hybrid graphs offer the most general interpretation.
Existing graphical methods that characterize entanglement properties of grid states are adapted to these interpretations.
These additional classes of grid states are shown to exhibit rich entanglement properties, including bound entanglement. 
Further, we introduce graphical techniques to construct bound entangled states in a modular fashion.
We also extend the grid states model to hypergraphs.
Our work, on one hand,  opens up possibilities for constructing additional families of mixed quantum states in the grid state model.
On the other hand, it can serve as an instrument for investigating entanglement problems from a graph theory perspective.

\end{abstract}

\maketitle


\section{\label{sec:level1}Introduction}

The realization that quantum entanglement can be used as a resource \cite{entanglement_as_a_resource} has garnered intense interest in the study and characterization of entanglement.
A fundamental problem is to determine whether a given quantum state is entangled or separable -- called the separability problem \cite{bruss_separability_problem}. 
It has been proven that determining whether an arbitrary quantum system is separable is an NP-hard problem \cite{Gurvits, np_hard_gharibian}. However, it can still be worthwhile to explore the problem in the context of some particular family of quantum states instead of  general states. In this paper, we focus on several families of quantum states that can be represented as combinatorial graphs, and determine entanglement properties of such states via graph theoretic methods.

Interest in interpreting so-called graph Laplacians as density matrices can be traced back to the work of Braunstein et at. \cite{Braunstein2006}, where it was shown that the normalized signed Laplacian matrix of a graph can be interpreted as a density matrix. 
This idea was refined by Lockhart et. al. in \cite{Lockhart2018, jh_phdthesis} by imposing a grid structure on graphs, called grid-labelled graphs. 
We expand on this concept and provide additional interpretations of grid-labelled graphs as quantum states, using various Laplacian matrices.

We first summarize the concept and properties of quantum grid states. Grid states, introduced  in \cite{Lockhart2018}, are mixed quantum states described by simple graphs called grid-labelled graphs. 
Note that these states are different from grid states in \cite{hastrup_measurement-free_2021}.
The vertices in a grid-labelled graph are arranged on a grid and are labelled with Cartesian indices $(i, j)$ row-wise from top-left to bottom-right. 
An edge $\{(i, j), (k, l)\}$ connecting vertices $(i, j)$ and  $(k, l)$ is interpreted as the state $1/\sqrt{2} (\ket{ij} - \ket{kl})$, called an edge state.
For example, Fig. \ref{fig:first_graph}(a) shows the vertex labelling in a grid-labelled graph with the $\ket{\phi^-} = \frac{1}{\sqrt{2}} (\ket{00}- \ket{11})$ Bell state. 
With this convention, the density matrix $\rho(G)$ of a grid state is defined as the equally weighted mixture of all projectors onto edge states in the corresponding grid-labelled graph $G$.

The (signed) Laplacian matrix of a grid-labelled graph, with a suitable normalization, is identical to its corresponding density matrix.
In order to see this, remember that the signed Laplacian matrix $L$ of a graph on $n$ vertices is the $n \times n$ matrix defined as
\begin{equation}
L = D - A,
\end{equation}
where $D$ is the degree matrix and $A$  the adjacency matrix \cite{diestel2010}. 
The degree matrix $D$ is an $n \times n$ diagonal matrix, in which each diagonal entry $D_{\alpha \alpha}$, where $1 \leq \alpha \leq n$, indicates the number of edges connecting to vertex $v_\alpha$ -- called the degree of vertex $v_\alpha$.  
The adjacency matrix $A$ is an $n \times n$ binary matrix such that if vertices $v_\alpha$ and $v_\beta$  are connected by an edge, the matrix entry $A_{\alpha \beta}$ is $1$, otherwise it is $0$ \cite{diestel2010}.
\begin{figure}
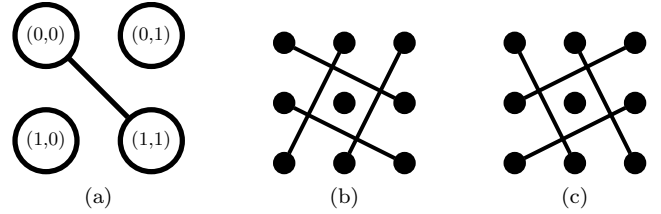

\subfloat[\label{subfig:phi_plus_l_graph}]{
    
    \includestandalone{figs/L_graph_phi_plus}
}
\hfill
\subfloat[\label{subfig:cross_hatch}]{
    
    \includestandalone{figs/xhatch}
}
\hfill
\subfloat[\label{subfig:cross_hatch_transpose}]{
    
    \includestandalone{figs/xhatch_transpose}
}
\caption{(a) $L$-graph of the $\ket{\phi^-}$ Bell state.
The pairs of integers indicate vertex indices. 
(b)  A $3 \times 3$ cross-hatch graph, and (c) its partial transpose.
}
\label{fig:first_graph}
\end{figure}

We call the grid-labelled graphs from \cite{Lockhart2018} $L$-graphs.
The degree criterion \cite{Braunstein2006, Lockhart2018} and the graph surgery procedure \cite{Lockhart2018} characterize  entanglement properties of grid states corresponding to $L$-graphs.
The degree criterion is a graphical method that can be used to verify if the density matrix of an $L$-graph is positive under partial transpose.
It makes use of the concept of partial transpose of a graph.
The partial transpose of an $L$-graph $G$ is another $L$-graph $G^\Gamma$ such that an edge
\(
\{(i,l), (k,j)\}
\)
exists in $G^\Gamma$ if and only if the edge
\(
\{(i,j), (k,l)\}
\)
exists in $G$.

\begin{restatable}[Degree Criterion for $L$-graphs from \cite{Braunstein2006, Lockhart2018}]{theorem}{degree-criterion-for-lgraphs}
   The density matrix $\rho(G)$ of an $L$-graph $G$ is positive under partial transpose if and only if $D(G) = D(G^\Gamma)$.
\end{restatable}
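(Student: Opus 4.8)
The plan is to reduce the PPT condition to a statement about signed Laplacians and then exploit the fact that a combinatorial Laplacian is positive semidefinite with the all-ones vector in its kernel. First I would record the two facts I need. Writing $m$ for the number of edges of $G$, the normalization quoted above gives $\rho(G) = \frac{1}{2m}L(G) = \frac{1}{2m}\bigl(D(G) - A(G)\bigr)$. Since partial transposition leaves diagonal entries invariant and sends the rank-one off-diagonal contribution $-\ketbra{ij}{kl} - \ketbra{kl}{ij}$ of the edge $\{(i,j),(k,l)\}$ to $-\ketbra{il}{kj} - \ketbra{kj}{il}$, which is precisely the contribution of the edge $\{(i,l),(k,j)\}$, I would check termwise that
\begin{equation}
\rho(G)^\Gamma = \frac{1}{2m}\bigl(D(G) - A(G^\Gamma)\bigr).
\end{equation}
The crucial observation is that the diagonal is still the degree matrix $D(G)$ of the \emph{original} graph, whereas the off-diagonal part is exactly $-A(G^\Gamma)$ by the defining property of $G^\Gamma$.

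For the easy direction ($\Leftarrow$), if $D(G) = D(G^\Gamma)$ then the matrix above equals $\frac{1}{2m}\bigl(D(G^\Gamma) - A(G^\Gamma)\bigr) = \frac{1}{2m}L(G^\Gamma)$, which is positive semidefinite because any signed graph Laplacian decomposes as a sum of rank-one projectors $(\ket{u}-\ket{v})(\bra{u}-\bra{v})$ over its edges. Hence $\rho(G)^\Gamma \succeq 0$ and $\rho(G)$ is PPT.

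For the converse ($\Rightarrow$), I would set $M := 2m\,\rho(G)^\Gamma = D(G) - A(G^\Gamma)$ and let $\mathbf{1}$ denote the all-ones vector. The key computation is $M\mathbf{1} = D(G)\mathbf{1} - A(G^\Gamma)\mathbf{1} = d_G - d_{G^\Gamma}$, the vector of degree differences, whence
\begin{equation}
\mathbf{1}^\top M \mathbf{1} = \sum_v \bigl(d_G(v) - d_{G^\Gamma}(v)\bigr) = \operatorname{tr}D(G) - \operatorname{tr}D(G^\Gamma).
\end{equation}
By the handshake lemma $\operatorname{tr}D(G) = 2m = \operatorname{tr}D(G^\Gamma)$, since partial transposition is a bijection on edges and so $G$ and $G^\Gamma$ share the same edge count; thus this quadratic form vanishes. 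I would then invoke the elementary rigidity fact that a positive semidefinite $M$ with $\mathbf{1}^\top M \mathbf{1} = 0$ must satisfy $M\mathbf{1} = 0$ (writing $M = B^\top B$, the form equals $\norm{B\mathbf{1}}^2$). Combined with $M\mathbf{1} = d_G - d_{G^\Gamma}$, this forces $d_G = d_{G^\Gamma}$, i.e.\ $D(G) = D(G^\Gamma)$.

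I expect the converse to be the only subtle step, and the main obstacle is that it is tempting but false to argue directly from the diagonal: $\rho(G)^\Gamma$ and $\rho(G^\Gamma)$ have identical off-diagonal parts and differ only by the diagonal, trace-zero perturbation $D(G) - D(G^\Gamma)$, yet a positive semidefinite matrix can tolerate such a perturbation and remain positive semidefinite, so positivity alone does not immediately pin the diagonals together. The resolution is exactly the all-ones-vector test above, which turns the global trace-zero constraint into the pointwise equality of degrees via the kernel structure of the Laplacian; identifying $\mathbf{1}$ as the right test vector is the one genuinely non-mechanical idea in the argument.
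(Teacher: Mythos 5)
Your proof is correct, and in the necessity direction it takes a genuinely different route from the one the paper builds on. The common starting point is the identity $\rho(G)^\Gamma = \frac{1}{2m}\bigl(D(G) - A(G^\Gamma)\bigr) = \frac{1}{2m}\bigl(L(G^\Gamma) + \Delta\bigr)$ with $\Delta = D(G)-D(G^\Gamma)$, which is exactly the decomposition in the proof of Observation \ref{obs:applicability_of_degree_criterion_quick_check}; your sufficiency direction then coincides with the paper's Observation \ref{obs:degree_criterion_sufficient}. For necessity, the paper (following Braunstein et al.) relies on the perturbation result Lemma \ref{lem:degree_criterion_support_lemma}: if $M\geq 0$ has a vector $\mathbf{v}\in\{-1,1\}^n$ in its kernel and $\Delta$ is nonzero, diagonal and traceless, then $M+\Delta\ngeq 0$, proved by exhibiting an explicit negative direction $\mathbf{v}+a\mathbf{e}_i$ and analysing the discriminant of the resulting quadratic in $a$. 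You instead note that $\mathbf{1}^\top \bigl(L(G^\Gamma)+\Delta\bigr)\mathbf{1}$ vanishes automatically (the handshake lemma, equivalently $\operatorname{tr}\Delta=0$ together with $\mathbf{1}\in K[L(G^\Gamma)]$) and invoke the rigidity of positive semidefinite matrices --- a vanishing quadratic form forces a kernel vector --- to get $M\mathbf{1}=d_G-d_{G^\Gamma}=0$ componentwise. Your route is shorter, avoids the discriminant case analysis, and would generalize verbatim to the paper's $Q$-graph and hybrid settings, since for any $\mathbf{v}\in\{-1,1\}^n$ one still has $\mathbf{v}^\top\Delta\mathbf{v}=\operatorname{tr}\Delta=0$ while $\bigl(\mathrm{L}(G^\Gamma)+\Delta\bigr)\mathbf{v}=\Delta\mathbf{v}$ vanishes iff $\Delta=0$. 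What the paper's lemma buys in exchange is a constructive witness of non-positivity (an explicit vector on which the form is negative); logically the two arguments are interchangeable here. Your closing remark --- that the trace-zero constraint alone cannot pin the diagonals together and one must pass through the kernel --- correctly identifies the crux, and your observation that $G$ and $G^\Gamma$ have the same edge count (partial transposition being an involution on edges of a simple graph) is the one small fact that needs, and receives, justification.
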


For example, the cross-hatch graph from \cite{Lockhart2018}, shown in Fig. \ref{fig:first_graph}(b), satisfies $D(G) = D(G^\Gamma)$.
The corresponding density matrix is therefore positive under partial transpose.

The graph surgery procedure \cite{Lockhart2018} is a graphical method that allows to verify entanglement using the range criterion \cite{Horodecki1997}.
We restate the corollary of the range criterion from \cite{Lockhart2018} as it also is the basis for graph surgery procedures presented in this paper.

\begin{restatable}[\cite{Lockhart2018}]{corollary}{range-criterion-corollary}
\label{coro:range_criterion_corollary}
If a rank $r$ density matrix has less than $r$ product vectors in its range, then it is entangled.
\end{restatable}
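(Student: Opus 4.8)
The plan is to prove the contrapositive: I will show that if a rank-$r$ density matrix $\rho$ is separable, then its range must contain at least $r$ linearly independent product vectors. Since the hypothesis of the corollary is precisely that the range contains fewer than $r$ (linearly independent) product vectors, this contrapositive immediately yields entanglement.

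First I would invoke the definition of separability: a separable $\rho$ admits a decomposition $\rho = \sum_i p_i \ketbra{\psi_i}{\psi_i}$ with all $p_i > 0$ and each $\ket{\psi_i} = \ket{a_i} \otimes \ket{b_i}$ a product vector. The central step is then to establish that the range of $\rho$ coincides with the span of these product vectors. This follows from a standard spectral argument: for any $\ket{\phi}$ we have $\rho\ket{\phi} = \sum_i p_i \langle \psi_i | \phi\rangle \ket{\psi_i}$, which lies in the span, so $\operatorname{range}(\rho) \subseteq \operatorname{span}\{\ket{\psi_i}\}$; conversely, since every $p_i > 0$, the kernel of $\rho$ equals $\{\ket{\phi} : \langle\psi_i|\phi\rangle = 0 \text{ for all } i\}$, i.e.\ the orthogonal complement of $\operatorname{span}\{\ket{\psi_i}\}$, and taking orthogonal complements gives the reverse inclusion.

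With this identity in hand the conclusion is immediate: the range has dimension $r$ (the rank of $\rho$) and is spanned by the product vectors $\{\ket{\psi_i}\}$, so at least $r$ of them must be linearly independent. These $r$ linearly independent product vectors lie in the range, contradicting the assumption that there are fewer than $r$ product vectors in the range. Hence $\rho$ cannot be separable and is therefore entangled.

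I expect the only genuine subtlety — rather than a real obstacle — to be fixing the precise meaning of ``less than $r$ product vectors.'' The count must be read as the maximal number of \emph{linearly independent} product vectors contained in the range (equivalently, whether the product vectors lying in the range actually span it), since any product vector may be freely rescaled. Once this reading is adopted, the argument reduces to the elementary fact that a spanning set of an $r$-dimensional space contains at least $r$ linearly independent elements, and no appeal to the full strength of the range criterion — in particular its partial-transpose clause — is required for this direction.
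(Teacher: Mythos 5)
Your proof is correct. The paper does not actually supply a proof of this corollary --- it restates it from the cited reference and treats it as a consequence of the range criterion of Horodecki, so there is no in-paper argument to compare against line by line. Your route is the standard one and matches what the citation implicitly relies on: for a separable $\rho = \sum_i p_i \dyad{\psi_i}$ with product $\ket{\psi_i}$ and $p_i>0$, the range of $\rho$ equals $\mathrm{span}\{\ket{\psi_i}\}$ (your kernel/orthogonal-complement argument for the reverse inclusion is exactly right), so the range contains at least $r$ linearly independent product vectors; the contrapositive gives the corollary. The only difference is that you re-derive this spanning fact from the separable decomposition rather than quoting the range criterion as a black box, which makes the argument self-contained and, as you note, avoids any appeal to the partial-transpose clause. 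Your remark that ``less than $r$ product vectors'' must be read as the maximal number of linearly independent product vectors in the range is the correct reading and is consistent with how the corollary is used in the graph surgery argument (where one shows there are \emph{no} product vectors in the range at all).
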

Graph surgery involves performing a sequence of row and column surgeries on an $L$-graph.
Row surgery is carried out by first selecting an isolated vertex, say
\(
(i,j),
\)
in the $L$-graph and performing the ``CUT'' step, in which all edges connected to vertices in row $i$ are removed. 
This is followed by the ``STITCH" step, which reconnects the path between every pair of vertices not in row $i$, if the ``CUT'' step severed the path \cite{Lockhart2018}.
In column surgery, the ``CUT'' and the ``STITCH'' steps are performed on the vertices in column $j$.
The graph produced by a row / column surgery is denoted as $G_{ij}^R$ / $G_{ij}^C$, where the superscript indicates the type of surgery -- $R$ for row surgery and $C$ for column surgery, and the subscript $ij$ denotes the isolated vertex chosen for the surgery. 
In effect, row / column surgery produces a simpler graph with fewer edges, unless vertices in the target row / column are all isolated vertices.
Figure \ref{fig:L_graph_surgery_example} shows an example of a row and a column surgery on an $L$-graph.

It was shown in \cite{Lockhart2018} that any product vector in the range of the density matrix $\rho(G)$ of an $L$-graph $G$ -- and thereby in the range of $L(G)$ --  must also be in the range of either $L(G_{ij}^R)$ or of $L(G_{ij}^C)$.
Since $G_{ij}^R$ and $G_{ij}^C$  are $L$-graphs, further row / column surgeries can be performed on them, and on the resulting graphs, and so on.
Therefore, if iterated graph surgery on an $L$-graph $G$ always leads to the empty graph $G_E$, then any product vector in the range of $L(G)$ must also be in the range of $L(G_E)$, which is the zero matrix.
This is clearly a contradiction, which means there are no product vectors in the range of $L(G)$. And, the corresponding density matrix is entangled according to Corollary \ref{coro:range_criterion_corollary}.

\begin{figure}
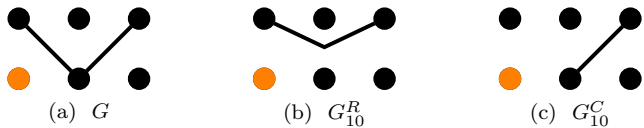

\subfloat[\label{subfig:example_graph} $G$]{

    \includestandalone{figs/example_graph_for_surgery}
}
\hfill
\subfloat[\label{subfig:example_graph_row_surgery} $G^R_{10}$]{

    \includestandalone{figs/example_graph_row_surgery}
}
\hfill
\subfloat[\label{subfig:example_graph_col_surgery} $G^C_{10}$]{

    \includestandalone{figs/example_graph_col_surgery}
}
\caption{(a)  An $L$-graph $G$.
(b) and (c) Graphs $G^R_{10}$ and $G^C_{10}$ produced respectively by row and column surgeries on $G$ with vertex $(1,0)$ in orange as the selected isolated vertex. 
For row surgery, all edges connected to vertex $(1,1)$ are removed in the CUT step.
As a result, the vertices $(0,0)$ and $(0, 2)$ get disconnected, and then get reconnected in the STITCH step, which produces graph $G^R_{10}$ in (b).
Likewise, for column surgery, all edges connected to vertex $(0, 0)$ are removed. 
This does not disconnect any path between vertices not in column 0. 
The STITCH step is therefore not necessary.
The graph $G^C_{10}$ in (c) is the result.
}
\label{fig:L_graph_surgery_example}
\end{figure}

The degree criterion and the graph surgery procedure connect entanglement properties of grid states to structural properties of $L$-graphs.
Together, they enable the construction of bound entangled grid states.
Furthermore, genuine multipartite entanglement is found in higher-dimensional grid states \cite{Lockhart2018}.
Such rich entanglement properties raise further questions. 
Are there other ways to interpret grid-labelled graphs as quantum states?
Would the states also exhibit entanglement properties such as bound entanglement? 
Can the degree criterion and the graph surgery procedure be extended to such new interpretations of grid-labelled graphs?
In this paper, we investigate these questions using additional types of Laplacian matrices. 
Specifically, the notion of grid-states is extended using the signless Laplacian matrix, and the weighted signed and signless Laplacian matrices.
These interpretations lead us to conceive hybrid graphs, which represent density matrices that are mixtures of edge states corresponding to the signed and the signless Laplacian matrices.
For these new states, we derive the analogous degree criteria and graph surgery procedures, and use them to construct bound entangled states.
As a proof of concept, we also show that a degree criterion can be derived for grid-labelled hypergraphs.

We largely follow the nomenclature from  \cite{Lockhart2018}.
For clarity, we occasionally prefix certain terms with the letter symbols of corresponding Laplacian matrices.
For example, we call the grid-labelled graphs from  \cite{Lockhart2018} $L$-graphs.
Further, we make no distinction between Laplacian matrices and density matrices when normalization is irrelevant.
Similarly, since only bipartite quantum systems are considered in this paper, the partial transpose of a matrix $M$ is denoted by $M^\Gamma$ without loss of generality, as it is only used in relation to the Peres-Horodecki (also PPT) criterion \cite{Horodecki1996}, which does not depend on the transposed subsystem. 
We write a graph as $G = (V, E)$, where $V$ and $E$ are the vertex and the edge sets.
Throughout this paper, we always assume that the density matrices are normalized.
Additionally, depending on the context, we may use both a boldface letter or the braket notation for representing vectors.
For example, for product vectors, the braket notation is the clearer notation.

With the following observation it is possible to check if the degree criterion can be adapted to a new interpretation of grid-labelled graphs.

\begin{restatable}{observation}{degreecriterionquickcheck}
    \label{obs:applicability_of_degree_criterion_quick_check}
    Let $G$ be a grid-labelled graph on $n$ vertices and $\rho(G)$ be the corresponding density matrix via any of the interpretations mentioned previously. If a vector $\mathbf{v}$ with all its components equal to $\pm 1$ (henceforth $\mathbf{v}  \in \{-1, 1\}^n$) exists in the kernel of $\rho(G^\Gamma)$, and  if $\rho(G)$ is separable, then $D(G) = D(G^\Gamma)$.
\end{restatable}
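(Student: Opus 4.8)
The plan is to combine the Peres--Horodecki criterion with a single structural identity that compares the \emph{matrix} partial transpose $\rho(G)^\Gamma$ with the density matrix $\rho(G^\Gamma)$ built directly from the partially transposed graph. The starting point is the remark that, for every interpretation considered, these two operators have identical off-diagonal entries and differ only on the diagonal. A matrix partial transpose moves the coherence generated by an edge $\{(i,j),(k,l)\}$ to the entries $(i,l),(k,j)$ while leaving all diagonal entries untouched; the graph partial transpose is defined precisely so that its edge $\{(i,l),(k,j)\}$ reproduces exactly the same coherences, but now deposits the associated diagonal weight at the transposed vertices. Summing over all edges, the difference collapses to a diagonal operator, giving
\begin{equation}
\rho(G)^\Gamma = \rho(G^\Gamma) + \kappa\,\bigl(D(G) - D(G^\Gamma)\bigr),
\end{equation}
for a positive normalization constant $\kappa$. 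I would verify this off-diagonal matching for the signed, signless, and weighted cases; since partial transposition never touches the diagonal and $G,G^\Gamma$ share the same (weighted) edge set, the argument is uniform.

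Writing $\Delta = D(G) - D(G^\Gamma)$, the next step uses the two hypotheses. Separability of $\rho(G)$ gives, via Peres--Horodecki, that $\rho(G)^\Gamma \geq 0$. Evaluating the quadratic form on the given $\mathbf{v}$ and inserting $\rho(G^\Gamma)\mathbf{v}=0$ yields $\mathbf{v}^{\top}\rho(G)^\Gamma\mathbf{v} = \kappa\,\mathbf{v}^{\top}\Delta\mathbf{v}$. Because $\Delta$ is diagonal and every component of $\mathbf{v}$ squares to $1$, the right-hand side equals $\kappa\operatorname{tr}(\Delta)$, which vanishes since the partial transpose preserves the trace and both $\rho(G)$ and $\rho(G^\Gamma)$ are normalized. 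Hence $\mathbf{v}^{\top}\rho(G)^\Gamma\mathbf{v}=0$, and since $\rho(G)^\Gamma$ is positive semidefinite this forces $\rho(G)^\Gamma\mathbf{v}=0$, i.e.\ $\mathbf{v}$ lies in its kernel as well.

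To conclude, I would subtract the two kernel relations: from $\rho(G)^\Gamma\mathbf{v}=0$ and $\rho(G^\Gamma)\mathbf{v}=0$ the identity gives $\kappa\,\Delta\mathbf{v}=0$, hence $\Delta\mathbf{v}=0$. As $\Delta$ is diagonal and no component of $\mathbf{v}$ is zero, every diagonal entry of $\Delta$ must vanish, which is exactly $D(G)=D(G^\Gamma)$.

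The routine ingredients---trace invariance of the partial transpose, and the fact that a vanishing quadratic form of a positive semidefinite operator implies membership in its kernel---are standard. The genuine content, and the step I expect to demand the most care, is the structural identity: one must confirm that, across all the new interpretations, the graph-theoretic partial transpose agrees with the matrix partial transpose everywhere off the diagonal, so that their difference is precisely the diagonal degree-difference operator. Once that is in place, the rest is a short linear-algebra deduction powered entirely by the $\pm 1$ structure of $\mathbf{v}$.
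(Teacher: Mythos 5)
Your proof is correct, and it shares the paper's skeleton -- the structural identity $\rho(G)^\Gamma = \rho(G^\Gamma) + \kappa\,\Delta$ with $\Delta = D(G)-D(G^\Gamma)$ diagonal and traceless, combined with the Peres--Horodecki criterion -- but it replaces the paper's key lemma with a more direct argument. The paper imports a perturbation result from Braunstein et al.\ (Lemma \ref{lem:degree_criterion_support_lemma}): if $\Delta\neq 0$ one picks a positive diagonal entry $\delta=\Delta_{ii}$, forms $\mathbf{w}=\mathbf{v}+a\mathbf{x}$ with $\mathbf{x}$ the $i$-th basis vector, and shows the quadratic form $a^2(\delta+M_{ii})\pm 2\delta a$ goes negative for suitable $a$, contradicting $\rho(G)^\Gamma\geq 0$. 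You instead evaluate the quadratic form directly on $\mathbf{v}$, get $\mathbf{v}^{\top}\rho(G)^\Gamma\mathbf{v}=\kappa\operatorname{tr}(\Delta)=0$, invoke the standard fact that a vanishing quadratic form of a positive semidefinite operator forces kernel membership, and conclude $\Delta\mathbf{v}=0$, whence $\Delta=0$ entrywise because no component of $\mathbf{v}$ vanishes. Your route avoids the case split on the sign of $\mathbf{v}_i$ and the discriminant argument entirely, and is arguably cleaner; the paper's route isolates the perturbation statement as a reusable lemma (it is invoked again in the hypergraph degree criterion, Theorem \ref{thm:hypergraph_degree_criteion}, where the relevant vector lies in the kernel of $Q(G^\Gamma)$ rather than of the full partially transposed operator, so your shortcut would need the same identity-plus-kernel bookkeeping there). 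One small point to make explicit when writing this up: the single constant $\kappa$ on both sides of your identity, and the trace cancellation, rely on $G$ and $G^\Gamma$ having the same number of edges (or total edge weight), which holds because the graph partial transpose is a bijection on edges.
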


The observation is proven in Appendix \ref{apdx:additional_graph_theory_concepts}.

\section{\label{sec:signless_Laplacians}$Q$-Grid States}
In this section, grid-labelled graphs are interpreted with the signless Laplacian matrix.
The signless Laplacian of a graph $G$  is defined as \( Q = D + A \), where $D$ and $A$ are the degree and the adjacency matrices of $G$. 
Normalized, the signless Laplacian is a proper density matrix.
We call the quantum states described by the normalized signless Laplacian $Q$-grid states.
The corresponding graphs are called $Q$-graphs. 
Graph features such as grid structure and vertex labelling are unchanged for $Q$-graphs, while the interpretation of edges $\{(i,j), (k,l)\}$ changes. 
A $Q$-edge state has the form
\(1/\sqrt{2} (\ket{ij} + \ket{kl})\).
The density matrix of a $Q$-grid state represented by a $Q$-graph $G = (V, E)$ is defined as
\begin{equation}
\label{eqn:q_graphs_density_matrix_definition}
\rho_Q(G) = \frac{1}{\abs{E}}\sum_{e \in E} \dyad{e} = \frac{1}{\abs{E}} Q(G),
\end{equation}
where $\{\ket{e}\}$ are the $Q$-edges states of edges in $E$.
The notion of partial transpose of $L$-graphs in  \cite{Lockhart2018}  is directly applicable to $Q$-graphs because it does not depend on the sign of the Laplacian matrix.

\begin{figure}
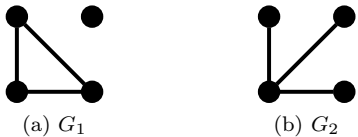

    \centering
    \hspace*{\fill}
    \subfloat[$G_1$]
    {
        \includestandalone{figs/q_graph_comparison1}
    } 
    \hfill
    \subfloat[$G_2$]
    {
        \includestandalone{figs/q_graph_comparison2}
    }
    \hspace*{\fill}
    \caption{
    (a) Graph $G_1$. (b) Graph $G_2$.
    Graphs $G_1$ and $G_2$ are partial transposes of each other.
    }
    \label{fig:q_l_surgery_comparison}
\end{figure}

In the following, we adapt the degree criterion and the graph surgery procedure to $Q$-graphs. 
We use Observation \ref{obs:applicability_of_degree_criterion_quick_check} to identify $Q$-graphs for which the degree criterion is applicable.
The observation requires that for a $Q$-graph $G$ on $n$ vertices the signless Laplacian $Q(G^\Gamma)$ of its partial transpose graph must have a vector $\mathbf{v} \in \{-1,1\}^n$ in its kernel. 
This is only fulfilled for bipartite graphs (see Lemma \ref{lem:singless_Laplacian_null_space_only_if_bipartite}).
Therefore, we require this condition on the partial transpose of the graph.
Remember that a graph is bipartite if its vertex set can be divided into two disjoint subsets such that no edge in the graph connects vertices in the same subset.
%
%
%
%
%
%
%
%
\begin{restatable}[Degree Criterion for $Q$-graphs]{theorem}{degreecriterionqgraphs}
\label{thm:degree_criterion_Q}
Let $G$ be a $Q$-graph. If $\rho_Q(G)$ is separable and $G^\Gamma$ is bipartite, then $ D(G) = D(G^{\Gamma})$.
\end{restatable}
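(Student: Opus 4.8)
The plan is to deduce the statement directly from Observation~\ref{obs:applicability_of_degree_criterion_quick_check}, so that the proof reduces to verifying that observation's hypotheses for the $Q$-graph interpretation. The observation already carries the substantive implication ``$\rho(G)$ separable and a vector $\mathbf{v}\in\{-1,1\}^n$ lies in $\ker\rho(G^\Gamma)$'' $\Rightarrow$ ``$D(G)=D(G^\Gamma)$'', and separability of $\rho_Q(G)$ is assumed. Hence the only thing I would need to supply is a vector $\mathbf{v}\in\{-1,1\}^n$ lying in the kernel of $\rho_Q(G^\Gamma)$, produced from the bipartiteness of $G^\Gamma$.

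To construct such a $\mathbf{v}$, I would pass to the quadratic form of the signless Laplacian. Writing $Q(G^\Gamma)=D+A$, a short computation grouping terms edge-by-edge gives, for any $\mathbf{v}\in\mathbb{R}^n$,
\begin{equation}
\mathbf{v}^{T} Q(G^\Gamma)\,\mathbf{v}=\sum_{\{a,b\}\in E(G^\Gamma)}(v_a+v_b)^2 .
\end{equation}
Thus $Q(G^\Gamma)$ is positive semidefinite, and $\mathbf{v}$ lies in its kernel exactly when $v_a=-v_b$ for every edge $\{a,b\}$. A sign assignment $\mathbf{v}\in\{-1,1\}^n$ with this property is precisely a proper two-colouring of $G^\Gamma$, which exists if and only if $G^\Gamma$ is bipartite; this is the content of Lemma~\ref{lem:singless_Laplacian_null_space_only_if_bipartite}. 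Here I would record two bookkeeping points: isolated vertices impose no edge constraint and may be given either sign, so they never obstruct the construction; and since $\rho_Q(G^\Gamma)=\tfrac{1}{\abs{E}}Q(G^\Gamma)$ with $\abs{E}=\abs{E(G^\Gamma)}>0$ (the partial transpose is a bijection on edges), the normalisation is irrelevant and $\ker\rho_Q(G^\Gamma)=\ker Q(G^\Gamma)$.

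With these pieces in place the argument closes immediately: bipartiteness of $G^\Gamma$ yields, via Lemma~\ref{lem:singless_Laplacian_null_space_only_if_bipartite}, a vector $\mathbf{v}\in\{-1,1\}^n$ in $\ker\rho_Q(G^\Gamma)$, and combined with the assumed separability of $\rho_Q(G)$ this is exactly the input required by Observation~\ref{obs:applicability_of_degree_criterion_quick_check}, giving $D(G)=D(G^\Gamma)$. I expect the only genuine work to be the characterisation of the $\pm1$ kernel vector through bipartiteness --- the quadratic-form identity together with the two-colouring correspondence --- which is precisely why Lemma~\ref{lem:singless_Laplacian_null_space_only_if_bipartite} is invoked; everything downstream of it is routine.
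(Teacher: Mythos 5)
Your proposal is correct and follows essentially the same route as the paper: the paper's proof of Theorem~\ref{thm:degree_criterion_Q} is precisely ``apply Observation~\ref{obs:applicability_of_degree_criterion_quick_check}, using Lemma~\ref{lem:singless_Laplacian_null_space_only_if_bipartite} to supply the $\pm 1$ kernel vector from bipartiteness of $G^\Gamma$.'' Your quadratic-form identity $\mathbf{v}^{T}Q\mathbf{v}=\sum_{\{a,b\}}(v_a+v_b)^2$ is just the incidence-matrix factorization $Q=RR^{T}$ that the paper uses inside that lemma, so even the supporting detail matches.
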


The proof of Theorem \ref{thm:degree_criterion_Q} is found in Appendix \ref{apdx:signless_Laplacian}. The degree criterion for $Q$-graphs, like its counterpart for $L$-graphs, is necessary and sufficient for $2 \times 2$ and $2 \times 3$ systems, due to the PPT criterion.
The bipartite condition for the graph transpose in the degree criterion for $Q$-graphs has an important  implication. 
There exist grid-labelled graphs that, if interpreted as $Q$-graphs, are separable, but are entangled if interpreted as $L$-graphs.
For example, the graphs $G_1$ and $G_2$ in Fig. \ref{fig:q_l_surgery_comparison}, if treated as $L$-graphs, represent entangled states because $D(G_1) \neq D(G_2)$.
If instead both are treated as $Q$-graphs, $G_1$ still represents an entangled state because $G_2$ is bipartite and $D(G_1) \neq D(G_2)$.
On the other hand, the degree criterion is not applicable to $G_2$ because its partial transpose $G_1$ is not bipartite.
It is easily verified that the density matrix $\rho_Q(G_2)$ is separable.

\begin{figure}
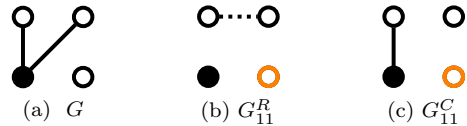

\centering
\hspace*{\fill}
\subfloat[\label{subfig:graph_components} $G$]{

    \includestandalone{figs/Q_ex}
}
\hfill
\subfloat[$G_{11}^R$]{
    \includestandalone{figs/Q_row_sur}
}
\hfill
\subfloat[$G_{11}^C$]{
    \includestandalone{figs/Q_col_sur}
}
\hspace*{\fill}
\caption{Vertices are colored in black and white to show that the graphs are bipartite.
Then vertex chosen for graph surgery is indicated in orange.
Solid and dashed edges indicate $Q$- and $L$-edges, respectively. (a) $Q$-graph $G$. 
%
%
(b) Graph $G^R_{11}$. 
The CUT step splits the connected component with 
vertices $(0,0)$, $(0,1)$, and $(1,0)$.
Since vertices $(0,0)$ and $(0,1)$ are in the same partition, it is not possible to reconnect them  with a $Q$-edge. So an $L$-edge is used. 
(c) Graph $G^C_{11}$.}
\label{fig:Q-graph-example}
\end{figure}

We now extend the graph surgery procedure to $Q$-graphs. 
We call graph surgery on $Q$-graphs $Q$-surgery. 
To understand $Q$-surgery, we need the concept of connected components. 
A connected component of a graph is a subgraph that has a path between any two of its vertices, and no paths between any of its vertices and  the remaining vertices of the original graph.
An isolated vertex trivially satisfies the definition and is considered a connected component.
For example, the graph $G_1$ in Fig. \ref{fig:q_l_surgery_comparison} has two connected components and the graph $G_2$ has one.
Like $L$-surgery, $Q$-surgery is a sequence of row and/or column surgeries. 
For simplicity, $Q$-surgery is only defined for bipartite $Q$-graphs. Row surgery is performed as follows:
\begin{itemize} 

    \item CUT: Select an isolated vertex $(i,j)$ and remove all edges attached to vertices in row $i$.

    \item STITCH: If the CUT step splits any connected component and vertices in the split constituents, excluding the ones in row $i$, all belong to the same partition, reconnect the constituents with $L$-edge(s).
    Otherwise, reconnect the constituents with  $Q$-edge(s).
\end{itemize}
Note that in the STICH step, if $Q$-edge(s) are used for reconnection, each $Q$-edge must connect vertices in opposing partitions.

Likewise, column surgery is performed on vertices in column $j$.
The graph resulting from a row/column surgery on vertex $(i, j)$ is denoted as $G^R_{ij}$ / $G^C_{ij}$.

An iteration of row/column surgery on an $L$-graph always produces an $L$-graph. 
In contrast, the analogous case is not necessarily true for $Q$-graphs.
Suppose a connected component of a $Q$-graph split in the CUT step is reconnected in the STITCH step with $Q$-edge(s), while another split connected component is reconnected with $L$-edge(s).
The resulting graph is then not a $Q$-graph because it has both $L$- and $Q$-edges in it.
Nonetheless, it still holds for $Q$-graphs that any product vector in the range of the density matrix of the original $Q$-graph must be in the range of the density matrix of the graph produced after an iteration of a row and column surgery.
This is formalized in the following observation.

\begin{restatable}{observation}{graphsurgeryqgraphs}
\label{obs:graph_surgery_q_graphs}
Let $G$ be a bipartite $Q$-graph on $n$ vertices with an isolated vertex $(i, j)$.
If a product vector $\ket{\mu \, \nu} \in  R[\rho_Q(G)]$, where $R$ denotes the range, then 
\paragraph*{\textbullet} 
$\ket{\mu \, \nu} \in R[\rho_Q(G^R_{ij})]$ or  $ R\big[\rho_L(G^R_{ij})\big]$, or
\paragraph*{\textbullet}
$\ket{\mu \, \mu} \in R\left[\rho_Q(G^C_{ij})\right]$ or  $ R\left[\rho_L(G^C_{ij})\right]$, or
\paragraph*{\textbullet}
$\ket{\mu \, \nu} \in R\left[\rho(G')\right]$,

where $G'$ is a hybrid graph (see Section \ref{sec:hybrid_Laplacian}).
\end{restatable}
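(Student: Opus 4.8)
The plan is to characterize range membership through orthogonality to the kernel and then to track how the STITCH rule is designed precisely to preserve that kernel. Since $\rho_Q(G)=\frac{1}{\abs{E}}Q(G)$ is positive semidefinite, $\ket{\mu\,\nu}\in R[\rho_Q(G)]$ is equivalent to $\ket{\mu\,\nu}$ being orthogonal to $\ker[\rho_Q(G)]$. For a bipartite $Q$-graph a vector $\ket{x}$ lies in the kernel exactly when $x_{ab}+x_{cd}=0$ for every $Q$-edge $\{(a,b),(c,d)\}$; hence the kernel is spanned by the \emph{signed indicator} vectors $\ket{s_C}$ of the connected components $C$ (taking the value $+1$ on one partition class of $C$ and $-1$ on the other), together with the basis vectors $\ket{ab}$ of the isolated vertices. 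I would record this description first, as both the hypothesis and the conclusion reduce to statements about such kernels.

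The second step exploits the isolated vertex. Because $(i,j)$ is isolated in $G$, the basis vector $\ket{ij}$ lies in $\ker[\rho_Q(G)]$, so the range hypothesis forces $\braket{ij}{\mu\,\nu}=\mu_i\nu_j=0$, i.e.\ $\mu_i=0$ or $\nu_j=0$. I would then split into the case $\mu_i=0$, to be handled by row surgery, and the symmetric case $\nu_j=0$, handled by column surgery. In either case the relevant surgered graph may be a $Q$-graph, an $L$-graph, or a hybrid graph, and I would treat all three uniformly so that the argument never assumes which one occurs; the hybrid alternative $G'$ in the statement is simply the surgered graph whenever both edge types appear.

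The heart of the argument is a structural claim about STITCH: each reconnected component of $G^R_{ij}$ has a one-dimensional kernel equal to the restriction of the original $\ket{s_C}$ to the vertices outside row $i$. Indeed, when two severed constituents are rejoined at vertices in the same partition class the signed indicator takes equal values there and an $L$-edge (imposing $x_u=x_v$) keeps it in the kernel, whereas when they are rejoined across partition classes the values are opposite and a $Q$-edge (imposing $x_u=-x_v$) does the same. Every reconnection edge is therefore chosen consistently with $\ket{s_C}$, so no frustrated cycle is created and connectivity is restored; thus the reconnected component contributes exactly the restricted signed indicator, while every vertex of row $i$ becomes isolated and contributes a basis vector. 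Consequently $\ker[\rho(G^R_{ij})]$ is spanned by the restricted signed indicators together with the basis vectors supported on row $i$.

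It then remains to verify orthogonality of $\ket{\mu\,\nu}$ to this kernel. Orthogonality to the row-$i$ basis vectors is immediate from $\mu_i=0$. For a restricted signed indicator, I would split $\ket{s_C}$ into its row-$i$ part and its complement: the product vector is orthogonal to $\ket{s_C}$ by hypothesis and orthogonal to the row-$i$ part again because $\mu_i=0$, hence orthogonal to the complement, which is the restricted indicator. This gives $\ket{\mu\,\nu}\perp\ker[\rho(G^R_{ij})]$, so $\ket{\mu\,\nu}\in R[\rho(G^R_{ij})]$; the case $\nu_j=0$ is identical with rows replaced by columns and yields $\ket{\mu\,\nu}\in R[\rho(G^C_{ij})]$, and reading off whether the stitched graph is pure or mixed recovers the three listed alternatives. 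I expect the main obstacle to be the structural claim of the third paragraph, namely showing rigorously that the partition-based STITCH rule both keeps $\ket{s_C}$ in the kernel \emph{and} introduces no new kernel directions, since this is exactly the point where the specific design of $Q$-surgery (and the appearance of $L$-edges inside an otherwise signless graph) has to be justified.
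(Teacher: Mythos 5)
Your proposal is correct and follows essentially the same route as the paper: use the isolated vertex to force $\mu_i=0$ or $\nu_j=0$, cancel the row-$i$ (resp.\ column-$j$) components of each component's signed kernel vector, observe that orthogonality of $\ket{\mu\,\nu}$ to the modified vectors is preserved, and identify these modified vectors with the kernel basis of the surgered graph (which may be an $L$-, $Q$-, or hybrid graph depending on the signs that survive). The only cosmetic difference is direction: the paper builds the surgered graph from the modified kernel vectors and then recognizes it as $G^R_{ij}$, whereas you start from $G^R_{ij}$ and compute its kernel; the structural point you flag as the main obstacle is handled in the paper by the corollary that each bipartite (or NOI/COI) connected component contributes exactly one kernel basis vector.
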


The proof of Observation \ref{obs:graph_surgery_q_graphs} is found in Appendix \ref{apdx:signless_Laplacian}.
An example of row and column surgeries on a $Q$-graph is shown in Fig. \ref{fig:Q-graph-example}.
With Observation \ref{obs:graph_surgery_q_graphs}, $Q$-surgery, like $L$-surgery, can be used in connection with Corollary \ref{coro:range_criterion_corollary}.
Therefore, if $Q$-surgery on a $Q$-graph always produces the empty graph, the associated density matrix is entangled.

In general, the $Q$- and $L$-grid states of the same grid-labelled graph are not unitarily equivalent.
In the following observation, we identify a condition when that is the case.

\begin{restatable}{observation}{qlgraphsunitarytransformation}
\label{obs:Q_L_graphs_unitary_transformation}
Let $G$ be a grid-labelled graph.
If $G$ is not bipartite, then $\rho_L(G)$ and $\rho_Q(G)$ are not unitarily equivalent. 
\end{restatable}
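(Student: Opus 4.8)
The plan is to reduce the statement to a comparison of spectra, and then to the multiplicity of the eigenvalue $0$. Both $\rho_L(G) = \frac{1}{\abs{E}}(D-A)$ and $\rho_Q(G) = \frac{1}{\abs{E}}(D+A)$ are real symmetric, hence normal, matrices. By the spectral theorem, two normal matrices are unitarily equivalent if and only if they share the same multiset of eigenvalues. The common positive factor $1/\abs{E}$ (well defined, since a graph with no edges is bipartite and therefore $\abs{E} \ge 1$ under our hypothesis) rescales all eigenvalues uniformly and in particular fixes the eigenvalue $0$ together with its multiplicity. It therefore suffices to show that $L(G)$ and $Q(G)$ have different spectra, and I would do this by comparing $\dim \ker L(G)$ with $\dim \ker Q(G)$.

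To compute these kernels I would use the incidence factorizations $L(G) = MM^{T}$ and $Q(G) = NN^{T}$, where $M$ is the oriented incidence matrix (each column carrying a $+1$ and a $-1$ at the endpoints of an edge) and $N$ is the unoriented incidence matrix (a $+1$ at both endpoints). Since $\ker L(G) = \ker M^{T}$, a vector $x$ lies in $\ker L(G)$ precisely when $x_u = x_v$ for every edge $\{u,v\}$, i.e. $x$ is constant on each connected component; hence $\dim \ker L(G)$ equals the number $c$ of connected components. Analogously $\ker Q(G) = \ker N^{T}$ consists of the $x$ with $x_u = -x_v$ on every edge, and a nonzero such $x$ exists on a component if and only if that component is bipartite (two-color it and assign the values $\pm t$). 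Thus $\dim \ker Q(G)$ equals the number $b$ of bipartite connected components, consistent with Lemma~\ref{lem:singless_Laplacian_null_space_only_if_bipartite}.

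To finish, I would observe that $G$ being non-bipartite means at least one of its connected components is non-bipartite, so $b \le c - 1 < c$. Consequently the eigenvalue $0$ has multiplicity $b$ in $Q(G)$ but multiplicity $c$ in $L(G)$; the spectra differ, and by the reduction above $\rho_L(G)$ and $\rho_Q(G)$ cannot be unitarily equivalent.

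The genuinely substantive step is the kernel count for $Q(G)$ — the characterization that its nullity equals exactly the number of bipartite components — together with the disconnected-graph bookkeeping; everything else is a direct application of the spectral theorem. I expect the main care to be in handling graphs with several components, so that the strict inequality $b < c$ is correctly deduced from ``$G$ is not bipartite'' rather than tacitly assuming that $G$ is connected.
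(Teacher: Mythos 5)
Your proof is correct and follows essentially the same route as the paper's: both compare the nullities of $L(G)$ and $Q(G)$, identifying them with the number of connected components and the number of bipartite connected components respectively, and conclude from the non-bipartiteness of $G$ that these differ, so the matrices cannot be unitarily equivalent. The only difference is presentational --- you derive the kernel counts directly from the incidence factorizations $L=MM^{T}$ and $Q=NN^{T}$, whereas the paper cites these facts from the literature and from its Corollary on bipartite components.
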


A proof of Observation \ref{obs:Q_L_graphs_unitary_transformation} is given in Appendix \ref{apdx:signless_Laplacian}.

\section{\label{sec:weighted_Laplacians} Grid states corresponding to weighted Graphs}
Weighted graphs generalize the notion of edges in graphs and allow non-zero, positive weights to be associated with each edge in the graph \cite{Mohar1997}.
In this section, the weighted signed and signless Laplacian matrices are interpreted as quantum states that correspond to the respective weighted $L$- and $Q$-graphs. 

Edge states in a weighted $L$- or a $Q$-graph have the same form as in their unweighted counterparts.
However, the density matrix is defined as
\begin{equation}
\label{eqn:weigted_graph_rho_definition}
\rho(G_w) = \frac{1}{\sum_e w_e} \sum_{e \in E} w_e\dyad{e}
\end{equation}
where $G_w$ is a weighted grid-labelled graph, $\{\ket{e}\}$ are the edge states of edges in $G_w$, and $\{w_e\}$ the respective non-zero, positive edge weights.
If the edges denote $L$-edge states ($Q$-edge states), the density matrix is the normalized signed (signless) Laplacian of the weighted graph.
The signed and the signless Laplacian matrices of weighted graphs are defined as $L = D - A$ and $Q = D + A$, respectively. 
The degree of a vertex in a weighted graph is the sum of edge weights of all edges that connect to it, and the degree matrix $D$ is a diagonal matrix with degrees of vertices as its diagonal entries.
Likewise, the adjacency matrix $A$ also accounts for edge weights.
The matrix entry $A_{\alpha \beta}$ is $w_{\alpha \beta}$ if vertices $v_\alpha$ and $v_\beta$ are connected by an edge weighted $w_{\alpha \beta}$, otherwise it is 0 \cite{Mohar1997}.
Notice that in an unweighted graph all edge weights are implicitly 1.

The edges in the partial transpose graph $G^\Gamma$ of a weighted grid-labelled graph $G$ carry the weights of the corresponding edges in $G$.
 The degree criteria and the graph surgery procedures on unweighted $L$- and $Q$-graphs directly apply to weighted graphs. 
 Lemma
\ref{lem:null_space_of_weighted_unweighted_graph_Laplacian_conincice}
justifies this claim.

\begin{restatable}{lemma}{nullspaceofweightedunweightedgraphlaplacianconincice}
\label{lem:null_space_of_weighted_unweighted_graph_Laplacian_conincice}
If the vertex and the edge sets of two weighted $L$-graphs (resp. $Q$-graphs) are identical, their  signed (resp. signless) Laplacians have identical kernels.
\end{restatable}

The proof of Lemma \ref{lem:null_space_of_weighted_unweighted_graph_Laplacian_conincice} is found in Appendix \ref{apdx:weighted_graphs}. 
With Lemma \ref{lem:null_space_of_weighted_unweighted_graph_Laplacian_conincice} and Observation \ref{obs:applicability_of_degree_criterion_quick_check},
the degree criteria for unweighted $L$- and $Q$-graphs are also valid for weighted $L$- and $Q$-graphs.
Likewise, $L$- and $Q$ surgeries also directly apply to weighted graphs.
Since Laplacian matrices are hermitian, Lemma \ref{lem:null_space_of_weighted_unweighted_graph_Laplacian_conincice} implies that Laplacians of weighted graphs with identical vertex and edge sets have identical ranges. 
This means if graph surgery on an unweighted $L$- or $Q$-graph always yields the empty graph, it must be that graph surgery on any other weighted graph with the same vertex and edge sets must also always yield the empty graph.
Therefore, edge weights are irrelevant for graph surgery and the graph surgery procedures for unweighted $L$- and $Q$-graphs can be used on weighted $L$- and $Q$-graphs.
Edge weights alone also do not determine if the density matrix corresponding to a weighted $L$- or $Q$-graph is entangled or separable. 

Moreover, Observation \ref{obs:Q_L_graphs_unitary_transformation} can be applied to weighted $Q$-graphs as formalized in the following corollary.

\begin{restatable}{corollary}{weightedqlgraphsunitarytransformation}
\label{coro:weighted_Q_L_graphs_unitary_transformation}
Let $G_w$ be a weighted grid-labelled graph. 
If $G_w$ is not bipartite, then $\rho_L(G_w)$ and $\rho_Q(G_w)$ are not unitarily equivalent.
\end{restatable}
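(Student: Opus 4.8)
The plan is to reduce the weighted statement to the unweighted Observation~\ref{obs:Q_L_graphs_unitary_transformation} by transferring the relevant spectral invariant --- the rank --- from $G_w$ to the unweighted graph sharing its vertex and edge sets. The central fact is that unitary equivalence preserves rank (equivalently, the dimension of the kernel), so it suffices to show that $\rho_L(G_w)$ and $\rho_Q(G_w)$ have different ranks whenever $G_w$ is non-bipartite.

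First I would let $G$ denote the unweighted grid-labelled graph obtained from $G_w$ by setting all edge weights to one. Since bipartiteness is determined solely by the edge set, $G$ is non-bipartite precisely when $G_w$ is. By Lemma~\ref{lem:null_space_of_weighted_unweighted_graph_Laplacian_conincice}, the signed Laplacians of $G_w$ and $G$ have identical kernels, and likewise for the signless Laplacians; as noted after that lemma, hermiticity upgrades this to equality of ranges, hence of ranks. Consequently $\operatorname{rank}\rho_L(G_w) = \operatorname{rank}\rho_L(G)$ and $\operatorname{rank}\rho_Q(G_w) = \operatorname{rank}\rho_Q(G)$, where normalization merely rescales each Laplacian by a positive constant and does not affect the rank.

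It then remains to settle the unweighted case explicitly. For the non-bipartite $G$, the kernel of the signed Laplacian is spanned by the indicator vectors of the connected components, so $\dim\ker L(G)$ equals the number of components; the kernel of the signless Laplacian, by contrast, receives a one-dimensional contribution only from the bipartite components (an isolated vertex being a trivially bipartite component counted in both). Since at least one component of $G$ is non-bipartite, $\dim\ker Q(G) < \dim\ker L(G)$, i.e. $\operatorname{rank}\rho_L(G) \neq \operatorname{rank}\rho_Q(G)$. Combining this with the previous paragraph yields $\operatorname{rank}\rho_L(G_w) \neq \operatorname{rank}\rho_Q(G_w)$, and two matrices of different rank cannot be unitarily equivalent.

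The argument is essentially immediate once the right invariant is isolated, and there is no serious obstacle; the only point requiring care is that the non-equivalence asserted by Observation~\ref{obs:Q_L_graphs_unitary_transformation} must be certified by a quantity that survives reweighting. The rank (equivalently, the kernel dimension) is exactly such a quantity, and it is precisely what Lemma~\ref{lem:null_space_of_weighted_unweighted_graph_Laplacian_conincice} preserves. For this reason I would make the rank distinction explicit through the kernel counts above rather than cite the Observation as a black box, so that the transfer from the unweighted to the weighted setting is transparent.
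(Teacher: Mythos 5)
Your proposal is correct and follows essentially the same route as the paper's proof: pass to the unweighted counterpart, use Lemma~\ref{lem:null_space_of_weighted_unweighted_graph_Laplacian_conincice} to transfer kernels (hence ranks), and invoke the rank mismatch from the non-bipartite unweighted case established in Observation~\ref{obs:Q_L_graphs_unitary_transformation}. The only difference is that you unpack the kernel-dimension counts explicitly rather than citing that observation's proof as a black box, which is a matter of presentation, not substance.
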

The corollary is proved in Appendix \ref{apdx:weighted_graphs}.

\section{\label{sec:hybrid_Laplacian} Grid states with hybrid graphs}
 In this section, we approach the idea of interpreting graphs as quantum states from a physical point of view.
A density matrix that is a mixture of both $L$- and $Q$-edge states is not unphysical.
Is it then possible to represent such density matrices using grid-labelled graphs?
We answer this question in the affirmative by introducing the notion of hybrid graphs and describing analogous degree criteria and graph surgery procedures for them.

A hybrid graph contains both $L$- and $Q$-edges and is written as $G = (V, E_L + E_Q)$, where $V$ is the vertex set, and $E_L$ and $E_Q$ are the sets of $L$- and $Q$-edges, respectively. 
Its $L$- and $Q$-subgraphs are the graphs $S_l = (V, E_L)$ and $S_q = (V, E_Q)$.
Hybrid graphs slightly resemble signed graphs \cite{signed_graphs}, where each edge in a graph is given either a positive or a negative sign. 
However, we do not use the Laplacian matrix in \cite{signed_graphs} to derive the density matrix of hybrid graphs.
Instead, we treat hybrid graphs as compositions of $L$ and $Q$-graphs and define the hybrid Laplacian matrix as
\(
\mathcal{L}(G) = L(S_l) + Q(S_q).
\)
The normalized hybrid Laplacian is a density matrix that is the equally weighted mixture of all $L$- and $Q$-edge states in the corresponding graph.

Coexistence of $L$- and $Q$-edges limit general results on entanglement properties, because Observation \ref{obs:applicability_of_degree_criterion_quick_check} imposes different conditions on $L$- and $Q$-graphs.
Considering that, hybrid graphs are divided into three categories based on their edge-vertex characteristics:
\begin{itemize}
    \item{Non-Overlapping Incidence (NOI):}
    A hybrid graph with NOI has a bipartite $Q$-subgraph  and no vertex in it is connected by both a $Q$-edge and an $L$-edge. 
    \item{Conditionally Overlapping Incidence (COI):}
    A hybrid graph with COI has a bipartite $Q$-subgraph and every $L$-edge in it connects vertices that are both in the same partition.
    \item{General Incidence (GI):} 
    Hybrid graphs with GI have no restrictions on incidences of $L$- and $Q$-edges.
\end{itemize}

We call a hybrid graph with NOI a NOI-graph, and likewise for graphs with COI and GI. 
An example each of a NOI-, a COI-, and a GI-graph is given in Fig. \ref{fig:composite-graph-examples}.
Note that a NOI-graph is a special case of a COI graph, because vertices connected by $L$-edges in a NOI-graph can all be put in one of the two vertex partitions.

\begin{figure}
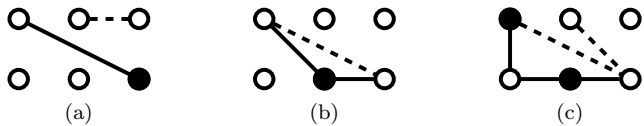

\subfloat[]{

    \includestandalone{figs/comp_NOI}
}
\hfill
\subfloat[]{
    \includestandalone{figs/comp_COI}
}
\hfill
\subfloat[]{
    \includestandalone{figs/comp_GEN}
}
\caption{Three types of hybrid graphs. 
Vertices are colored in black and white to show that the graphs are bipartite.
(a) An NOI-graph. 
(b) A COI-graph. 
(c) A GI graph. 
Solid and dashed edges indicate $Q$- and $L$-edges, respectively.}
\label{fig:composite-graph-examples}
\end{figure}


As before, we adapt the degree criteria and graph surgery procedures to hybrid graphs.  GI-graphs are too general for  Observation \ref{obs:applicability_of_degree_criterion_quick_check} to be applicable. Therefore, only NOI- and COI-graphs are considered. 
%
%
\begin{restatable}[Degree Criterion]{theorem}{degreecriterioncompositegraphs}
\label{thm:degree_criterion_composite_graphs_type_1}
If the density matrix $\rho(G)$ of a hybrid graph $G$ is separable and $G^\Gamma$ is a NOI- or a COI-graph, then
$ D(G) = D(G^\Gamma)$.
\end{restatable}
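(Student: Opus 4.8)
The plan is to reduce everything to Observation~\ref{obs:applicability_of_degree_criterion_quick_check}: once I exhibit a vector $\mathbf{v}\in\{-1,1\}^n$ lying in the kernel of $\rho(G^\Gamma)$, the assumed separability of $\rho(G)$ yields $D(G)=D(G^\Gamma)$ at once. Since $\rho(G^\Gamma)$ is a positive scalar multiple of the hybrid Laplacian $\mathcal{L}(G^\Gamma)=L(S_l^\Gamma)+Q(S_q^\Gamma)$, where $S_l^\Gamma$ and $S_q^\Gamma$ denote the $L$- and $Q$-subgraphs of $G^\Gamma$, the two matrices share a kernel. The entire task therefore collapses to producing a $\pm1$ vector in $\ker\mathcal{L}(G^\Gamma)$.

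Next I would translate kernel membership into two purely combinatorial constraints. Using the quadratic-form identities
\[
\mathbf{v}^{\mathsf T}L(S_l^\Gamma)\,\mathbf{v}=\sum_{\{a,b\}\in E_L^\Gamma}(v_a-v_b)^2,\qquad
\mathbf{v}^{\mathsf T}Q(S_q^\Gamma)\,\mathbf{v}=\sum_{\{a,b\}\in E_Q^\Gamma}(v_a+v_b)^2,
\]
both summands of $\mathcal{L}(G^\Gamma)$ are positive semidefinite, so $\mathbf{v}\in\ker\mathcal{L}(G^\Gamma)$ precisely when both forms vanish. For a vector with $\pm1$ entries this is equivalent to requiring $v_a=v_b$ on every $L$-edge of $G^\Gamma$ and $v_a=-v_b$ on every $Q$-edge of $G^\Gamma$. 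Thus I only need a $2$-coloring of the vertex set that \emph{agrees} across $L$-edges and \emph{disagrees} across $Q$-edges.

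Building this coloring is where the NOI/COI hypothesis on $G^\Gamma$ does the real work, and it is the step I expect to be the crux. The defining feature of a COI-graph (and, as a special case, a NOI-graph, as already noted in the excerpt) is that the $Q$-subgraph is bipartite and admits a partition $V=B\sqcup W$ under which every $Q$-edge is bichromatic while every $L$-edge stays within one class. Setting $\mathbf{v}=+1$ on $B$ and $\mathbf{v}=-1$ on $W$ therefore makes the disagreement constraint automatic for all $Q$-edges, and the within-class property delivers the agreement constraint for all $L$-edges; vertices touched by neither edge type are assigned freely. In a NOI-graph the $L$- and $Q$-edge endpoints are disjoint, so one first fixes $(B,W)$ from the bipartition of $S_q^\Gamma$ and then colors each $L$-component monochromatically without disturbing it, recovering the COI situation. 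Either way the resulting $\mathbf{v}\in\{-1,1\}^n$ satisfies both constraints, hence $\mathbf{v}\in\ker\mathcal{L}(G^\Gamma)=\ker\rho(G^\Gamma)$.

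Finally I would invoke Observation~\ref{obs:applicability_of_degree_criterion_quick_check} with this $\mathbf{v}$ and the separability of $\rho(G)$ to conclude $D(G)=D(G^\Gamma)$, completing the proof. It is worth remarking why GI-graphs are excluded: an $L$-edge crossing the $Q$-bipartition would simultaneously demand $v_a=v_b$ and $v_a=-v_b$, leaving no admissible $\pm1$ kernel vector. This incompatibility is exactly the obstruction that the COI condition is engineered to remove, which explains why the hypothesis cannot be weakened to general incidence.
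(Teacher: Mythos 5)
Your proposal is correct and follows essentially the same route as the paper: both reduce the theorem to Observation~\ref{obs:applicability_of_degree_criterion_quick_check} by exhibiting a $\pm1$ vector in $K[\mathcal{L}(G^\Gamma)]$ built from the bipartition of the $Q$-subgraph with $L$-edges confined to a single class (the paper packages this as Lemmas~\ref{lem:NOI_COI_kernel_connected_components} and~\ref{lem:null_space_vector_COI_NOI}). Your verification via the quadratic forms $\sum(v_a-v_b)^2$ and $\sum(v_a+v_b)^2$ is just the incidence-matrix factorization $\mathcal{L}=\mathcal{R}\mathcal{R}^T$ in disguise, so the difference is purely presentational.
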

A proof for Theorem \ref{thm:degree_criterion_composite_graphs_type_1} is provided in Appendix \ref{apdx:composite_graphs}.

Graph surgery on a NOI-graph involves both $L$- and $Q$-surgeries. 
Any connected component in a NOI-graph has either all $L$-edges or all $Q$-edges. One can thus perform $L$- and $Q$-surgery independently on the respective connected components.  %

\begin{figure}
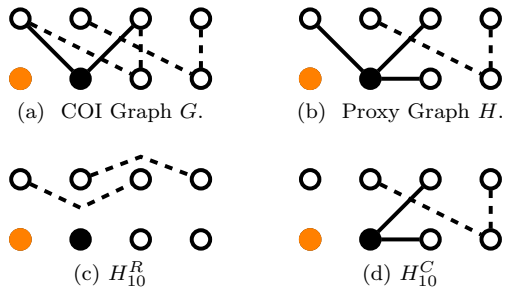

\subfloat[\label{subfig:coi_graph_surgery} COI Graph $G$.]{

    \includestandalone{figs/equiv_conn_comp1}
}
\hfil
\subfloat[
    \label{subfig:coi_graph_surgery_noi} 
        Proxy Graph $H$.
    ]{
    \includestandalone{figs/equiv_conn_comp2}
}
\hfil
\subfloat[$H^R_{10}$]{
    \includestandalone{figs/COI_row_surg}
}
\hfil
\subfloat[$H^C_{10}$]{
    \includestandalone{figs/COI_col_surg}
}
\caption{Vertices are colored in black and white to show that the graphs are bipartite.
Solid and dashed edges are $Q$- and $L$-edges, respectively. 
(a) Graph surgery on a COI-graph $G$ with vertex $(1, 0)$, colored orange, as the selected isolated vertex.
(b) Graph $H$, a proxy graph of $G$, as described in Section \ref{sec:hybrid_Laplacian}.
To derive $H$ from $G$, two $L$-edges $\{(0,0), (1,2)\}$ and $\{(0,2), (1,2)\}$ are removed and a $Q$-edge $\{(1,1), (1,2)\}$ is added.
(c) Graph $H^R_{10}$. 
Vertices $(0,0)$ and $(0,2)$ cannot be connected by a $Q$-edge because they belong to the same partition. So an $L$-edge is used.
(d) Graph $H^C_{10}$.
}
\label{fig:COI_graph_surgery}
\end{figure}

%
Graph surgery on a COI-graph however is not as straightforward. 
The non-identical STITCH steps of $L$- and $Q$-surgery are equally valid for any vertex with simultaneous incidences of $L$- and $Q$-edges.
This ambiguity is resolved by a proxy graph.

A \emph{proxy graph} of a COI-graph  is a NOI-graph such that the kernels of their hybrid Laplacians are identical.
It is constructed with a two-step process: first, by removing $L$-edges from all vertices on which both $L$- and $Q$- edges are incident; then, by reconnecting split connected components, if any, using $Q$-edges only.

\begin{restatable}{observation}{equivalentnonoverlappingincidencegraph}
\label{obs:equivalent_non_overlapping_incidence_graph}
Every COI-graph has a proxy graph.
\end{restatable}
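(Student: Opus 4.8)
The plan is to give a constructive proof: given any COI-graph $G$, explicitly build a candidate graph $H$ by the two-step procedure described just before the statement, verify that $H$ is indeed a NOI-graph, and then show that the kernels of the hybrid Laplacians $\mathcal{L}(G)$ and $\mathcal{L}(H)$ coincide. The construction itself is already spelled out in the text: first remove every $L$-edge that is incident to a vertex also touched by a $Q$-edge, then reconnect any connected components that this removal split, using only $Q$-edges. So the work of the proof is not in inventing $H$ but in checking that this recipe always terminates with a valid NOI-graph and preserves the kernel.

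First I would argue that $H$ satisfies the NOI conditions. By construction, after the first step no vertex carries both an $L$-edge and a $Q$-edge, which is exactly the non-overlapping incidence requirement; I must check that the second (reconnection) step does not reintroduce such an overlap. Since reconnection uses only $Q$-edges and is applied to components that were severed, I would verify that the newly added $Q$-edges can always be placed so as to respect the bipartition of the $Q$-subgraph (each $Q$-edge joining opposite partitions), invoking the COI hypothesis that every $L$-edge originally connected two vertices of the same partition. This is what guarantees that when an $L$-edge is deleted and its endpoints need reconnecting, the reconnection can be routed through the existing bipartite $Q$-structure. I would also note that $H$ remains bipartite with a bipartite $Q$-subgraph, so $H$ qualifies as a NOI-graph.

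The main obstacle, and the technical heart of the proof, is showing that the two graphs have identical kernels, i.e. $\ker \mathcal{L}(G) = \ker \mathcal{L}(H)$. The natural approach is to characterize $\ker \mathcal{L}(G)$ combinatorially. For a hybrid Laplacian $\mathcal{L}(G) = L(S_l) + Q(S_q)$, a vector $\mathbf{x}$ lies in the kernel iff $\mathbf{x}^\dagger \mathcal{L}(G)\,\mathbf{x} = 0$, and because $\mathcal{L}(G)$ is a positive-semidefinite sum of rank-one edge projectors (up to normalization), this happens iff $\mathbf{x}$ is orthogonal to every edge state: $x_{ij} = x_{kl}$ for each $L$-edge $\{(i,j),(k,l)\}$ and $x_{ij} = -x_{kl}$ for each $Q$-edge. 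Thus the kernel condition is a system of sign-constrained equalities on the components indexed by vertices. I would then show that the edge-modifications defining $H$ leave this solution set unchanged: deleting an $L$-edge whose equality constraint is already forced by the surrounding $Q$-edges (using the COI same-partition property and the bipartite sign pattern), and adding reconnecting $Q$-edges that impose only constraints already implied by connectivity. Concretely, along any path in a bipartite $Q$-component the constraints force $x$ to alternate in sign according to the partition, so an $L$-edge between two same-partition vertices imposes the equality $x_{ij}=x_{kl}$, which is precisely the relation enforced by the two-step-shorter $Q$-path replacing it — hence removing the $L$-edge and adding the appropriate $Q$-edge preserves every constraint. Carefully matching these constraints in both directions (every kernel vector of one graph satisfies the other's constraints) yields the kernel equality and completes the proof.
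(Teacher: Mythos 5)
Your proposal is correct and takes essentially the same route as the paper: the identical two-step construction, the same check that the $Q$-edge reconnection respects the bipartition without reintroducing $L$/$Q$ overlap at any vertex, and the same kernel-preservation argument. Your edge-by-edge sign-constraint characterization of $\ker\mathcal{L}$ (equality across $L$-edges, sign flip across $Q$-edges, propagated along paths within a component) is precisely the content of Lemma~\ref{lem:NOI_COI_kernel_connected_components}, which the paper invokes to conclude that preserving the connected components and their relative partitioning preserves the kernel.
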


The proof of Observation \ref{obs:equivalent_non_overlapping_incidence_graph} is found in Appendix \ref{apdx:composite_graphs}.
Deriving a proxy graph is akin to graph sparsification, which removes edges from a dense graph while preserving certain spectral properties of the Laplacian of the original graph \cite{graph_sparsification}.  
In the case of proxy graphs, only $L$-edges are removed and the preserved spectral property is the kernel of the hybrid Laplacian.
Given Observation \ref{obs:equivalent_non_overlapping_incidence_graph}, graph surgeries on a proxy NOI-graph and on the original COI-graph are equivalent.
Therefore, graph surgery on a COI-graph is performed  by first constructing a proxy NOI-graph and performing graph surgery on it. One iteration each of row and column surgeries on a COI-graph are shown in Fig. \ref{fig:COI_graph_surgery}.

The implication of graph surgery on hybrid graphs is the same as on $L$- and $Q$-graphs: If graph surgery on a hybrid graph always produces the empty graph, then the corresponding density matrix is entangled.

Hybrid graphs can also have weighted edges.
As in the case of weighted $L$- and $Q$-graphs, the degree criteria and the graph surgery procedures on unweighted hybrid graphs also apply to weighted hybrid graphs, as justified by the following lemma.

\begin{restatable}{lemma}{weightedcompositegraphsidenticalkernels}
\label{lem:weighted_composite_graphs_identical_kernels}
If the vertex and the edge sets of two weighted hybrid graphs are identical, their hybrid Laplacians have identical kernels.
\end{restatable}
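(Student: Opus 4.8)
The plan is to exploit the additive structure of the hybrid Laplacian together with the positive semidefiniteness of its signed and signless parts, and then reduce the statement to Lemma~\ref{lem:null_space_of_weighted_unweighted_graph_Laplacian_conincice}, which already settles the pure $L$- and $Q$-cases. Recall that $\mathcal{L}(G) = L(S_l) + Q(S_q)$, and that each edge contributes a positive multiple of a rank-one projector onto its edge state: an $L$-edge $\{(i,j),(k,l)\}$ of weight $w$ contributes $w\,(\ket{ij}-\ket{kl})(\bra{ij}-\bra{kl})$, and a $Q$-edge contributes $w\,(\ket{ij}+\ket{kl})(\bra{ij}+\bra{kl})$. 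Hence both $L(S_l)$ and $Q(S_q)$ are sums of positive semidefinite matrices, and are therefore themselves positive semidefinite.

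First I would invoke the standard fact that for two positive semidefinite matrices $M_1$ and $M_2$ one has $\ker(M_1+M_2) = \ker M_1 \cap \ker M_2$. This holds because $\mathbf{v}^\dagger(M_1+M_2)\mathbf{v}=\mathbf{v}^\dagger M_1\mathbf{v}+\mathbf{v}^\dagger M_2\mathbf{v}$ is a sum of two nonnegative terms, so it vanishes if and only if both terms vanish, and for a positive semidefinite matrix $M$ the condition $\mathbf{v}^\dagger M\mathbf{v}=0$ is equivalent to $M\mathbf{v}=0$. Applying this to $\mathcal{L}(G)=L(S_l)+Q(S_q)$ yields $\ker \mathcal{L}(G) = \ker L(S_l)\cap \ker Q(S_q)$.

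Now let $G$ and $G'$ be two weighted hybrid graphs with identical vertex and edge sets, so that their $L$-subgraphs $S_l,S_l'$ share the same vertex and edge sets, as do their $Q$-subgraphs $S_q,S_q'$. By Lemma~\ref{lem:null_space_of_weighted_unweighted_graph_Laplacian_conincice} applied to the weighted $L$-graphs $S_l,S_l'$ the signed Laplacians $L(S_l)$ and $L(S_l')$ have identical kernels, and applied to the weighted $Q$-graphs $S_q,S_q'$ the signless Laplacians $Q(S_q)$ and $Q(S_q')$ have identical kernels. Intersecting gives $\ker \mathcal{L}(G) = \ker L(S_l)\cap\ker Q(S_q) = \ker L(S_l')\cap\ker Q(S_q') = \ker\mathcal{L}(G')$, which is the claim.

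The only real obstacle is the kernel-intersection step, and in particular ensuring that the decomposition $\mathcal{L}(G)=L(S_l)+Q(S_q)$ does not conflate the two edge types. Since $E_L$ and $E_Q$ partition the edge set $E$, the subgraphs $S_l$ and $S_q$ are well defined, and the $L$- and $Q$-parts (with their weights) carry over unchanged when one passes from $G$ to $G'$ because the two graphs share the same $E_L$ and $E_Q$. Everything else reduces to the already-established weighted-graph lemma, so no genuinely new computation is required.
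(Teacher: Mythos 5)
Your proof is correct, but it reaches the conclusion by a different mechanism than the paper. The paper factors the hybrid Laplacian through its block incidence matrix, $\mathcal{L} = \mathcal{R}\mathcal{R}^T$ with $\mathcal{R} = [B_l \; R_q]$, and observes that the per-edge conditions encoded in $\mathcal{R}^T\mathbf{v} = 0$, namely $\sqrt{w}\,(\mathbf{v}_{a} \mp \mathbf{v}_{b}) = 0$, are independent of the weights --- a one-line extension of its own proof of Lemma~\ref{lem:null_space_of_weighted_unweighted_graph_Laplacian_conincice}. You instead split $\mathcal{L}(G) = L(S_l) + Q(S_q)$ as a sum of two positive semidefinite matrices, invoke the standard identity $\ker(M_1 + M_2) = \ker M_1 \cap \ker M_2$ for PSD summands, and then apply Lemma~\ref{lem:null_space_of_weighted_unweighted_graph_Laplacian_conincice} as a black box to each summand. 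The two arguments are close in substance --- $\mathcal{R}^T\mathbf{v} = 0$ is equivalent to $B_l^T\mathbf{v} = 0$ and $R_q^T\mathbf{v} = 0$ simultaneously, which is exactly your kernel intersection --- but yours is the more modular presentation: it avoids re-deriving the incidence-matrix conditions for the hybrid case and makes explicit the (correct) implicit assumption that identical edge sets means identical $E_L$ and $E_Q$ separately, so the decomposition into subgraphs is preserved. The paper's version is more uniform with its treatment of the pure $L$- and $Q$-cases. Both are valid; no gap.
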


The proof of Lemma \ref{lem:weighted_composite_graphs_identical_kernels} is found in Appendix \ref{apdx:composite_graphs}.

\section{\label{sec:bound_entangled} Construction of Bound Entangled States}
In
\cite{Lockhart2018, jh_phdthesis}, bound entangled $L$-grid sates are constructed using the degree criterion to verify a positive partial transpose of the density matrix, and the graph surgery procedure to verify entanglement.
This method can be used to construct new families of bound entangled states with the grid states presented in this paper.

\begin{restatable}{observation}{degreecriterionsufficient}
\label{obs:degree_criterion_sufficient}
If a grid-labelled graph $G$ satisfies $D(G) = D(G^\Gamma)$, the corresponding density matrix has a positive partial transpose, independent of whether the graph is interpreted as an $L$-graph, a $Q$-graph, a weighted graph, or a hybrid graph.
\end{restatable}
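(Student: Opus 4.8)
The plan is to show that the hypothesis $D(G) = D(G^\Gamma)$ forces the partial transpose of the density matrix to coincide exactly with the (normalized) Laplacian of the transposed graph $G^\Gamma$, which is manifestly positive semidefinite. I would dispatch all four interpretations at once by working with the unnormalized hybrid Laplacian $\mathcal{L}(G) = L(S_l) + Q(S_q)$, since the $L$-graph, $Q$-graph, and weighted cases are special instances (an $L$-graph has empty $S_q$, a $Q$-graph has empty $S_l$, and weights merely rescale the adjacency and degree contributions uniformly, as recorded in Section \ref{sec:weighted_Laplacians}). Positivity is insensitive to the normalizing scalar, and in any case $G$ and $G^\Gamma$ share the same edge count and total weight, so the normalizations agree.

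First I would split the Laplacian into its diagonal and off-diagonal parts, $\mathcal{L}(G) = D(G) - A(S_l) + A(S_q)$, where $D(G) = D(S_l) + D(S_q)$, and track how each part behaves under partial transpose. The central computation is that the partial transpose sends $\dyad{ij}{kl}$ to $\dyad{il}{kj}$. Hence every diagonal term $\dyad{ij}{ij}$ is fixed, giving $D(G)^\Gamma = D(G)$, while the off-diagonal contribution of each edge $\{(i,j),(k,l)\}$ is relocated to the position $\{(i,l),(k,j)\}$ with its sign, weight, and $L$-versus-$Q$ character intact. By the definition of the graph partial transpose, this relocated off-diagonal part is exactly $-A(S_l^\Gamma) + A(S_q^\Gamma)$, where $S_l^\Gamma$ and $S_q^\Gamma$ denote the $L$- and $Q$-subgraphs of $G^\Gamma$.

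Combining these, I obtain $\mathcal{L}(G)^\Gamma = D(G) - A(S_l^\Gamma) + A(S_q^\Gamma)$, whereas $\mathcal{L}(G^\Gamma) = D(G^\Gamma) - A(S_l^\Gamma) + A(S_q^\Gamma)$, so that
\[
\mathcal{L}(G)^\Gamma - \mathcal{L}(G^\Gamma) = D(G) - D(G^\Gamma),
\]
a purely diagonal matrix. Under the hypothesis this difference vanishes, so $\mathcal{L}(G)^\Gamma = \mathcal{L}(G^\Gamma)$. The last step is to invoke that any hybrid Laplacian is positive semidefinite, being (up to a positive scalar) a sum of projectors $\dyad{e}$ onto $L$- and $Q$-edge states; hence $\rho(G)^\Gamma \propto \mathcal{L}(G^\Gamma) \succeq 0$, which is precisely the PPT condition. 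Note this is only the sufficiency direction, so none of the bipartiteness hypotheses needed for the separability converses of Theorems \ref{thm:degree_criterion_Q} and \ref{thm:degree_criterion_composite_graphs_type_1} are required here, and the statement holds uniformly.

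The step I expect to carry the real weight is the bookkeeping in the second paragraph: verifying that partial transpose leaves the diagonal invariant and relocates each off-diagonal contribution to exactly the position dictated by $G^\Gamma$, with sign, weight, and edge type preserved. This is where one sees that the partial transpose of a single edge-state projector is \emph{not} itself an edge-state projector of the transposed edge -- its off-diagonal part matches, but its diagonal part does not -- which is exactly why the degree equality $D(G) = D(G^\Gamma)$ is the precise obstruction this observation removes. Everything after that is the elementary fact that every (signed, signless, weighted, or hybrid) Laplacian is a nonnegative combination of rank-one edge-state projectors.
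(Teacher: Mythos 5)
Your proof is correct and follows essentially the same route as the paper's: both establish that the partial transpose fixes the degree (diagonal) part and sends the adjacency part of $G$ to that of $G^\Gamma$, so that $\mathcal{L}(G)^\Gamma - \mathcal{L}(G^\Gamma) = D(G) - D(G^\Gamma)$ vanishes under the hypothesis, leaving the manifestly positive semidefinite Laplacian of $G^\Gamma$. The only difference is presentational -- you carry out the computation once in the hybrid setting and let the $L$-, $Q$-, and weighted cases follow as special instances, whereas the paper computes the $Q$-graph case explicitly and asserts that the same argument applies to the others.
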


A proof of Observation \ref{obs:degree_criterion_sufficient} is provided in Appendix \ref{apdx:bound_entangled_states}.
According to the observation, the degree criterion verifies that a grid-state is positive under partial transpose, and graph surgery verifies that it is entangled.
Given that, bound entangled $Q$-grid states can be constructed using the degree criterion and the graph surgery procedure defined in Section \ref{sec:signless_Laplacians} if both the Q-graph and its partial transpose graph are bipartite. 
The cross-hatch pattern from  \cite{Lockhart2018} satisfies these conditions.
The pattern is in fact applicable not only to $Q$-graphs, but also to weighted and hybrid graphs.
\begin{restatable}{theorem}{crosshatchgraphsareentangled}
\label{thm:cross_hatch_graphs_are_entangled}
The density matrix of an $m \times n$  cross-hatch graph with $m, n \geq 3$ is  bound entangled for all grid states independent of whether the graph is interpreted as an $L$-graph, a $Q$-graph, a weighted graph, or a hybrid graph.
\end{restatable}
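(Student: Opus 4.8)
The plan is to prove the two defining features of a bound entangled state---positivity under partial transpose (PPT) and entanglement---and to verify that each holds simultaneously under all four interpretations. For the PPT half I would first establish the purely combinatorial identity $D(G) = D(G^\Gamma)$ for the $m \times n$ cross-hatch with $m, n \geq 3$. The $3 \times 3$ instance is already recorded before the theorem; I would extend it by computing, vertex by vertex, the degree of each grid vertex in the cross-hatch and comparing it with the degree of that vertex in the partial-transpose graph, in which every edge $\{(i,j),(k,l)\}$ becomes $\{(i,l),(k,j)\}$, and closing the general case by an induction that peels off one boundary row and one boundary column. Once $D(G)=D(G^\Gamma)$ is in hand, Observation~\ref{obs:degree_criterion_sufficient} yields PPT for the $L$-, $Q$-, weighted, and hybrid readings in a single stroke, because that observation is interpretation independent; in the weighted case the identity persists under a partial-transpose-symmetric weighting, since the partial transpose carries each edge to its image with the same weight and the degree count then collapses to the unweighted one.

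For entanglement the engine is Corollary~\ref{coro:range_criterion_corollary}: it suffices to show that the range of the density matrix contains no product vectors, which the surgery machinery guarantees as soon as iterated surgery always terminates in the empty graph. The decisive structural observation is that the $L$-, $Q$-, and hybrid surgery procedures share an identical CUT step and restore the same connected-component structure in the STITCH step, differing only in whether the reconnecting edges are of $L$- or $Q$-type; hence the combinatorial evolution of the underlying graph---which vertices are isolated and which components merge---is the same in every interpretation. I would therefore prove once, extending the $L$-graph surgery argument of \cite{Lockhart2018} from the $3\times 3$ seed to general $m\times n$ by induction, that a fixed sequence of row and column surgeries drives the cross-hatch to the empty graph. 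Observation~\ref{obs:graph_surgery_q_graphs}, together with the hybrid surgery procedure built on the proxy-graph construction of Observation~\ref{obs:equivalent_non_overlapping_incidence_graph}, then guarantees that the very same sequence empties the $Q$-graph and the hybrid graph, even when intermediate steps pass through genuinely hybrid graphs; the bipartiteness of the cross-hatch and of its partial transpose, noted before the theorem, keeps $Q$-surgery and its bipartite requirements valid throughout. The weighted interpretation needs no separate analysis: by Lemmas~\ref{lem:null_space_of_weighted_unweighted_graph_Laplacian_conincice} and \ref{lem:weighted_composite_graphs_identical_kernels} the kernel, and hence the range, of a weighted Laplacian coincides with that of its unweighted counterpart, so the empty-graph reduction and the resulting absence of product vectors transfer verbatim. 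Combining PPT with entanglement gives bound entanglement in each interpretation.

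I expect the decisive obstacle to lie in the entanglement half, specifically in showing that every branch of the $Q$- and hybrid-surgery trees terminates at the empty graph. Because Observation~\ref{obs:graph_surgery_q_graphs} permits a single surgery to spawn $L$-, $Q$-, and genuinely hybrid successors, the termination argument must hold uniformly over all branches; the redeeming fact is that all successors share the same connected-component partition, so the single combinatorial reduction established for the $L$-graph controls them simultaneously, provided one checks at each stage that the isolated vertices needed to initiate the next surgery are present and that the $Q$-subgraph remains bipartite. A milder, bookkeeping-level difficulty is carrying both the degree identity and the surgery reduction cleanly from the $3\times 3$ case to all $m, n \geq 3$ by the boundary-peeling induction.
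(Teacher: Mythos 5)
Your overall architecture coincides with the paper's: PPT via the degree identity $D(G)=D(G^\Gamma)$ fed into Observation~\ref{obs:degree_criterion_sufficient}, entanglement via reduction to the empty graph under graph surgery combined with Corollary~\ref{coro:range_criterion_corollary}, and the weighted case dispatched through Lemmas~\ref{lem:null_space_of_weighted_unweighted_graph_Laplacian_conincice} and~\ref{lem:weighted_composite_graphs_identical_kernels}. Where you diverge is on the step you yourself flag as the decisive obstacle: transferring the empty-graph reduction from the $L$-interpretation to the $Q$- and hybrid interpretations. You propose to control an entire surgery tree whose branches may pass through genuinely hybrid successors, arguing that all branches share the same connected-component partition and checking bipartiteness and the availability of isolated vertices at every stage. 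The paper dissolves this problem with one structural remark you did not isolate: the connected components of a cross-hatch graph are isolated vertices and single edges, so the CUT step can never sever a path between two surviving vertices, the STITCH step never fires, every intermediate graph is again a sub-matching, and $L$-, $Q$-, and hybrid surgery are \emph{literally identical} sequences of CUT steps on these graphs. The known $L$-graph reduction (which the paper cites from prior work rather than reproving by your boundary-peeling induction) then applies verbatim, with no branching, no hybrid successors, and no bipartiteness bookkeeping. Your route is workable but substantially more laborious, and its general claim that $L$-, $Q$-, and hybrid STITCH steps ``restore the same connected-component structure'' would need care for COI graphs, where the proxy-graph construction alters the edge set; none of that care is needed once the matching structure of the cross-hatch is observed. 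One point in your favour: your caveat that the weighted degree identity requires the partial transpose to carry each edge's weight to its image is a genuine subtlety that the paper's one-line assertion ``all cross-hatch graphs satisfy the degree criterion'' passes over silently.
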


Theorem \ref{thm:cross_hatch_graphs_are_entangled} is proved in Appendix \ref{apdx:bound_entangled_states}.   Moreover, the cross-hatch pattern can be composed.
For example, irrespective of the Laplacian matrix used to interpret the resulting graph, a smaller cross-hatch graph can be embedded inside a bigger one as shown in Fig. \ref{fig:cross_hatch_example}(a) to produce new bound entangled states.
Likewise, the pattern can be tiled as shown in Fig. \ref{fig:cross_hatch_example}(b).
Both graphs in Fig. \ref{fig:cross_hatch_example} satisfy the degree criterion, because the constituent graphs in each graph individually satisfy the degree criterion. 
Therefore, they represent grid-states whose density matrices are positive under partial transpose.

Graph surgery on both graphs is carried out by first performing graph surgery on one of the  constituent graphs and  
then on the remaining edges of the other one. 
In the tiled composition, the STITCH step adds a diagonal edge, which can be treated as a part of another cross-hatch graph and be removed. 
In addition, the embedded and tiled compositions like in Fig. \ref{fig:cross_hatch_example} can also be composed to produce more bound entangled states, as long as the compositions satisfy the respective degree criterion and are reducible to empty graphs via graph surgery.
\begin{figure}
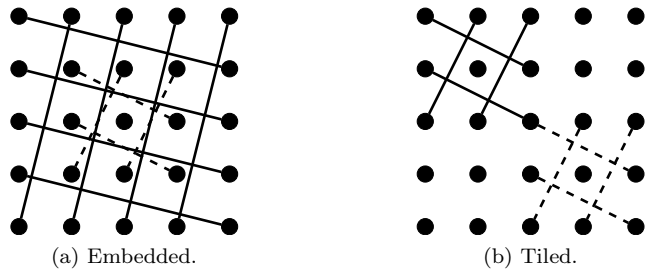

\subfloat[Embedded.]{

    \includestandalone{figs/stacked_xhatch_L}
    \label{subfig:embedded_stacked_xhatch}
}
\hfill
\subfloat[Tiled.]{

    \includestandalone{figs/staggered_xhatch_L1}
    \label{subfig:tiled_staggered_xhatch1}
    
}

\caption{Examples of composing cross-hatch graphs. %
Solid and dashed edges distinguish constituent graphs.
The states corresponding to both graphs are bound entangled, irrespective of their interpretation as weighted or unweighted $L$- or $Q$-graphs, or as hybrid graphs.
}
\label{fig:cross_hatch_example}
\end{figure}

\section{Grid states corresponding to Hypergraphs}
With hybrid graphs, we showed that it is possible to generate density matrices from a mixture of $Q$- and $L$-edge states.
By defining a suitable Laplacian matrix, we derived degree criteria and graph surgery procedures.
As a proof of concept, we follow the same approach to extend the grid state model to hypergraphs.

Hypergraphs generalize graphs and allow edges to contain more than two vertices \cite{chung_hypergraph_matrix}.
Here, we only consider hypergraphs in which all hyperedges contain exactly three vertices.
In the literature, various approaches to extend graph matrices to hypergraphs are found, which range from matrices in \cite{hypergraph_matrix,  chung_hypergraph_matrix, adjacency_hypergraph} to tensors in \cite{hypergraph_tensor}. 
None of these previous approaches leads to a  density matrix that can be elegantly represented by a grid-labelled hypergraph.
Therefore, we first extend the notion of edge states and define hyperedge states, from which we define the density matrix and the hypergraph Laplacian matrix.
As such, the hyperedge state is chosen to be of the form  $1/\sqrt{3}\left(\ket{ij} + \ket{kl} + \ket{mn}\right)$.
The density matrix is the equal mixture of all hyperedge states in a hypergraph, and the Laplacian matrix is the unnormalized density matrix.
Split into a diagonal and a non-diagonal matrix, the Laplacian of a hypergraph $H$ is written as
\begin{equation}
    \label{eq:hypergraph-laplacian}
    L(H) = D(H) + A(H),
\end{equation}
where the diagonal matrix $D(H)$  and the non-diagonal matrix $A(H)$ matrix are the degree and adjacency matrices, respectively. 
The diagonal entries of the degree matrix are the degrees of vertices in the hypergraph.
The degree of a vertex is the number of hyperedges incident on the vertex.
The adjacency matrix is defined as
\begin{equation}
    A_{\alpha \beta} = 
    \begin{cases} 
        \text{adj}(v_\alpha, v_\beta), \text{ if } \alpha \neq \beta, \\
        0, \text{ otherwise,}
    \end{cases}
\end{equation}
where $\text{adj}(v_\alpha, v_\beta)$ is the number of hyperedges connecting vertices $v_\alpha$ and $v_\beta$.

\subsection{Weighted Graph Model for hypergraph}
A hypergraph can be modeled with a weighted graph, and its Laplacian matrix can be connected to the signless Laplacian matrix of the weighted graph.

Consider a hypergraph $H$ with two hyperedges in Figs. \ref{fig:hypergraph_weighted_model}(a, b). 
Each hyperedge is turned into a clique as in Figs. \ref{fig:hypergraph_weighted_model}(c, d).
A clique is a subset of vertices of a graph such that every vertex in the set is connected to every other vertex in the set \cite{bondy2009graph}.
The cliques are combined into a weighted graph as in Fig. \ref{fig:hypergraph_weighted_model}(e) such that the edge weight of an edge connecting a vertex pair is the cumulative number of edges in all cliques that connect the vertex pair.
In Fig. \ref{fig:hypergraph_weighted_model}(e), the weights of black edges are all 1 and the orange edge is weighted 2.
We call the weighted graph derived in this fashion the graph of a hypergraph.
Formally, the graph of a hypergraph $H$ is a weighted graph $G$ such that any vertex pair $\{v_\alpha, v_\beta\}$ connected by a hyperedge in $H$ is connected in $G$ by an edge with weight $A(H)_{\alpha \beta}$.

With this construction, the adjacency matrix of a hypergraph and of its graph are the same matrix.
But the degree matrices are different.
Consider a hypergraph $H$ and its graph $G$. 
The degree of a non-isolated vertex $v_\alpha$ in $H$ is $D(H)_\alpha < \sum_\beta A(H)_{\alpha \beta}$.
However, in the graph $G$ the degree of the same vertex by definition is $D(G)_\alpha = \sum_\beta A(G)_{\alpha \beta}$.
The degree matrices of a hypergraph and its graph thus are offset by a diagonal non-negative matrix, which we call the offset matrix and define it as
\begin{equation}
    O(H) = D(G) - D(H),
\end{equation}
where $H$ is a hypergraph, $G$ its graph, and $O(H)$ the offset matrix.
With these observations, the hypergraph Laplacian of a hypergraph $H$ can be written as
\begin{equation}
\label{eq:hypergraph-laplacian-with-offset-matrix}
    L(H) = Q(G) - O(H),
\end{equation}
where $Q(\cdot)$ indicates the signless Laplacian.

With the weighted graph model, we can derive a degree criterion for hypergraph grid states.

\begin{restatable}{theorem}{hypergraphdegreecriterion}
\label{thm:hypergraph_degree_criteion}
 Let $H$ be a hypergraph and $G$ be its graph. If $\rho(H)$ is separable and $G^\Gamma$ is bipartite, then
 \(
 D(G) = D(G^\Gamma).
 \)
\end{restatable}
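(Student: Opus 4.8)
The plan is to reduce this statement to the already-established degree criterion for $Q$-graphs (Theorem~\ref{thm:degree_criterion_Q}), exploiting the decomposition $L(H) = Q(G) - O(H)$ derived above. The guiding idea is that the offset matrix $O(H)$, being diagonal in the computational product basis, represents separable ``noise'': adding it back to $\rho(H)$ recovers the weighted signless-Laplacian state $\rho_Q(G)$ of the graph $G$ without spoiling separability. Once separability has been transferred from $\rho(H)$ to $\rho_Q(G)$, the weighted $Q$-graph degree criterion does the rest, since its hypothesis ($G^\Gamma$ bipartite) is exactly the hypothesis of the present theorem.

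First I would record that $O(H) = D(G) - D(H)$ is a nonnegative diagonal matrix, so that $\tau := O(H)/\operatorname{tr} O(H)$ is a density matrix of the form $\sum_{(i,j)} p_{ij}\,\dyad{i}\otimes\dyad{j}$ over product basis states $\ket{ij}$, and is therefore manifestly separable. Next I would rewrite $L(H) = Q(G) - O(H)$ as $Q(G) = L(H) + O(H)$ and normalize; since $\operatorname{tr} L(H) = \operatorname{tr} O(H)$ (both equal the total number of vertex--hyperedge incidences, i.e.\ three times the number of hyperedges), this yields the convex identity $\rho_Q(G) = \tfrac{1}{2}\,\rho(H) + \tfrac{1}{2}\,\tau$. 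Consequently, if $\rho(H)$ is separable then $\rho_Q(G)$ is a convex combination of two separable states and is itself separable. Finally, because $G^\Gamma$ is bipartite, the signless Laplacian $Q(G^\Gamma)$ admits a vector in $\{-1,1\}^n$ in its kernel (Lemma~\ref{lem:singless_Laplacian_null_space_only_if_bipartite}), so Observation~\ref{obs:applicability_of_degree_criterion_quick_check} applies to $\rho_Q(G)$; invoking the degree criterion for $Q$-graphs (Theorem~\ref{thm:degree_criterion_Q}), which extends to weighted graphs by Lemma~\ref{lem:null_space_of_weighted_unweighted_graph_Laplacian_conincice}, forces $D(G) = D(G^\Gamma)$ and closes the argument.

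The main obstacle, and the genuinely new ingredient relative to the $Q$-graph case, is the offset matrix. For graph surgery it suffices to preserve only the \emph{kernel} of the Laplacian, but here I must argue the stronger fact that ``re-inflating'' $\rho(H)$ by $O(H)$ lands exactly on $\rho_Q(G)$ and \emph{preserves separability}. The crux is precisely the observation that $O(H)$ is diagonal in the product basis, hence separable; everything else is bookkeeping of traces together with an appeal to the weighted $Q$-graph criterion. I would also double-check the weight-preserving edge bijection underlying Observation~\ref{obs:applicability_of_degree_criterion_quick_check}, namely $\sum_\alpha D(G)_{\alpha\alpha} = \sum_\alpha D(G^\Gamma)_{\alpha\alpha}$, so that the degree-difference term in that observation sums to zero and the reduction to Theorem~\ref{thm:degree_criterion_Q} is valid.
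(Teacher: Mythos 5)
Your proposal is correct, but it takes a genuinely different route from the paper's proof. The paper works entirely on the level of partial transposes: it writes $L^\Gamma(H) = Q(G^\Gamma) + \Delta - O(H)$ with $\Delta = D(G) - D(G^\Gamma)$, uses $L^\Gamma(H) \geq 0$ (from the PPT criterion) and $O(H) \geq 0$ to conclude $Q(G^\Gamma) + \Delta \geq 0$, and then derives a contradiction with Lemma \ref{lem:degree_criterion_support_lemma} applied to the $\{-1,1\}^n$ kernel vector of $Q(G^\Gamma)$ unless $\Delta = 0$. You instead stay on the untransposed side: from $Q(G) = L(H) + O(H)$ and the trace identity $\tr L(H) = \tr O(H) = 3\abs{E}$ (which checks out, since each hyperedge contributes hypergraph-degree $1$ but weighted graph-degree $2$ to each of its three vertices) you obtain the convex decomposition $\rho_Q(G) = \tfrac{1}{2}\rho(H) + \tfrac{1}{2}\tau$ with $\tau$ diagonal in the product basis and hence separable, transfer separability to $\rho_Q(G)$, and invoke the weighted $Q$-graph degree criterion as a black box. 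Both arguments are sound. Your version is more modular and arguably more illuminating -- it exhibits the hypergraph state as an equal mixture of a weighted $Q$-grid state's "signal" and separable diagonal "noise," reducing the theorem entirely to machinery already established for weighted $Q$-graphs. The paper's version is more economical in its hypotheses: it only ever uses positivity of $L^\Gamma(H)$, so it establishes the conclusion under the weaker assumption that $\rho(H)$ is PPT rather than separable (your route can be adapted to match this, since a convex combination of PPT states is PPT, but as written it leans on separability to match the stated hypothesis of Theorem \ref{thm:degree_criterion_Q}).
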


For the proof of Theorem \ref{thm:hypergraph_degree_criteion}, see Appendix \ref{apdx:hypergraphs}.
Unlike the degree criteria for grid-labelled graphs, it is not clear that the hypergraph degree criterion is sufficient for the positive partial transpose of the hypergraph density matrix. 
Suppose $H$ is a hypergraph and $G$ is its graph, and $D(G) = D(G^\Gamma)$.
Then,
\begin{align}
     Q^\Gamma(G) &= D^\Gamma(G^\Gamma) + A^\Gamma(G) \nonumber \\
    &= D(G^\Gamma) + A(G^\Gamma)  = Q(G^\Gamma),
\end{align}
and from Equation (\ref{eq:hypergraph-laplacian-with-offset-matrix})
\begin{align}
     Q(G) &= L(H) + O(H). \nonumber \\
     \implies Q^\Gamma(G) &= L^\Gamma(H) + O^\Gamma(H) 
     = L^\Gamma(H) + O(H),
\end{align}
from which it follows
\begin{equation}
\label{eq:hypergrah-laplacian-partial-transpose}
L^\Gamma(H) = Q(G^\Gamma) - O(H).
\end{equation}
From Equation (\ref{eq:hypergrah-laplacian-partial-transpose}), it is not
clear if $Q(G^\Gamma) - O(H)$ is always positive semidefinite.
On the other hand, consider a $2 \times 2$ hypergraph $H$ with a single hyperedge shown in Fig. \ref{fig:hypergraph_weighted_model}(a).
The graph $G$ of the hypergraph is the graph in Fig. \ref{fig:hypergraph_weighted_model}(c).
It is easily seen that $D(G) \neq D(G^\Gamma)$, and also verified using the PPT criterion that $\rho(H)$ is entangled.

Graph surgery cannot be extended to hypergraphs via the weighted graph model. 
The graph surgery procedure for weighted $Q$-graphs requires the graphs to be bipartite. 
The graph of a hypergraph, although a weighted $Q$-graph, is not bipartite, because cliques are inherently not bipartite.

\begin{figure}
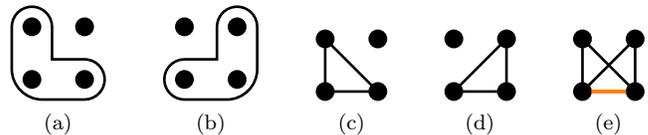

\centering
\subfloat[\label{subfig:hypergraph_example_hyperedge_1}]{
    \includestandalone{figs/hyp_weighted_model_hypedge1}
}
\hfill
\subfloat[]{
    \includestandalone{figs/hyp_weighted_model_hypedge2}
}
\hfill
\subfloat[]{
    \includestandalone{figs/hyp_weighted_model_graph1}
}
\hfill
\subfloat[]{
    \includestandalone{figs/hyp_weighted_model_graph2}
}
\hfill
\subfloat[]{
    \includestandalone{figs/hyp_graph_graph}
}
\caption{Weighted graph model of a hypergraph. (a) and (b) Two hyperedges. (c) and (d) Their respective cliques. (e) Weighted graph derived from the cliques.}
\label{fig:hypergraph_weighted_model}
\end{figure}

Even though this interpretation of hypergraph grid states does not allow graph surgery, it illustrates the flexibility of the grid-state model. 
We were not only able to define a hypergraph laplacian matrix in an ad-hoc manner to suit our purpose, but also integrate the weighted Laplacian to derive a degree criterion for hypergraph grid states.

\section{Conclusion and Outlook}
This paper reveals a rich interplay between graphs and quantum states.
Using a variety of interpretations of graphs as density matrices, we have identified additional families of grid states beyond the ones originally suggested in \cite{Lockhart2018} and shown that their entanglement properties relate to properties of the corresponding graphs. 
In particular, we investigated signless Laplacians and weighted graphs.
We introduced the concept of hybrid graphs, containing two different types of edges, and derived the entanglement properties of the corresponding grid states.
Additionally, we constructed new families of bound entangled states with these new grid states, using the method from \cite{Lockhart2018}. 
We showed that the cross-hatch pattern is not only bound entangled for the new families of grid states, but it could also be composed to construct more bound entangled states.
We noted two additional links between graph theory and grid states -- resemblance between hybrid graphs and signed graphs, and between proxy graph construction and graph sparsification. 
Further work into these links would be interesting. 
For example, one could investigate if proxy graphs can be connected to the concept of local graph isomorphism discussed in \cite{jh_phdthesis}.

We demonstrated with hypergraph grid states that our approach for hybrid graphs can be applied in other contexts.
Similar approaches could be used to incorporate more general edge states, for example, with the normalized Laplacian defined in  \cite{chung1997spectral} and with complex Laplacian matrices. 

Since genuine multipartite entanglement has been found in $L$-grid states \cite{Lockhart2018},
for further work, one could investigate if the same is the case for grid states presented above.
Finally, as the graph surgery procedure is not possible without isolated vertices, 
it would be desirable to improve graph surgery or find alternative procedures that do not require isolated vertices.

\begin{acknowledgments}
We acknowledge financial support by the QuantERA grant QuICHE and the German ministry of education and research (BMBF, grant no. 16KIS1119K), and by Deutsche Forschungsgemeinschaft (DFG, German Research Foundation) under Germany's Excellence Strategy – Cluster of Excellence Matter and Light for Quantum Computing (ML4Q) EXC 2004/1 – 390534769.  
\end{acknowledgments}

\appendix
\counterwithin{lemma}{section}
\counterwithin{observation}{section}
\counterwithin{corollary}{section}

\section{Additional Graph Theory Concepts}
\label{apdx:additional_graph_theory_concepts}

The appendices contain the proofs of results stated in the main text.
The  statements are repeated before each proof.
%
In this section, in addition to the proof of Observation \ref{obs:applicability_of_degree_criterion_quick_check}, we present graph concepts used in the proofs.

The \emph{unoriented incidence matrix} of a graph $G = (V, E)$ is the $\abs{V} \times \abs{E}$ matrix $R$
such that $R_{ij} = \sqrt{w_j}$ if edge $e_j$ with weight $w_j$ is incident on vertex $v_i$, and $R_{ij} = 0$ otherwise.
The \emph{oriented incidence matrix} $B$ results from negating one of the two non-zero entries in each column of the matrix $R$ \cite{Mohar1997}.
The signed and the signless Laplacian matrices satisfy  $L = BB^T$ and $Q = RR^T$ \cite{diestel2010, signlessLapla}. 

For the proof of Observation \ref{obs:applicability_of_degree_criterion_quick_check}, the following lemma is needed. 
Hereafter, $K(\cdot)$ denotes the kernel of a matrix.
\begin{lemma}[\cite{Braunstein2006}]
\label{lem:degree_criterion_support_lemma}
Let $M$ and $\Delta$ be $n \times n$ real matrices. 
Let $M$ be symmetric and positive semidefinite,   and  $\Delta$ be  nonzero, diagonal, and traceless. 
If a vector $\mathbf{v} \in \{-1, 1\}^n$ exists in $K(M)$, then $M + \Delta \ngeq 0$.
\end{lemma}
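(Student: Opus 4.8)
The plan is to test positive semidefiniteness of $M + \Delta$ directly on the distinguished vector $\mathbf{v}$ and then argue by contradiction. First I would evaluate the quadratic form $\mathbf{v}^{T}(M+\Delta)\mathbf{v}$. Since $\mathbf{v} \in K(M)$ we have $M\mathbf{v} = 0$, so the $M$-part contributes nothing and $\mathbf{v}^{T}(M+\Delta)\mathbf{v} = \mathbf{v}^{T}\Delta\mathbf{v}$. Because $\Delta$ is diagonal and every component of $\mathbf{v}$ satisfies $v_i^{2} = 1$, this sum collapses to $\sum_i \Delta_{ii} = \operatorname{tr}(\Delta)$, which vanishes by the traceless hypothesis. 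Hence $\mathbf{v}^{T}(M+\Delta)\mathbf{v} = 0$.

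This alone does not furnish strict negativity, so the second step is to suppose for contradiction that $M + \Delta \geq 0$ and exploit the rigidity of positive semidefinite matrices. Note first that $M + \Delta$ is symmetric, being the sum of the symmetric $M$ and the diagonal $\Delta$. For any real symmetric positive semidefinite matrix $A$, a vector at which the quadratic form vanishes must lie in the kernel: writing $A = A^{1/2}A^{1/2}$ gives $\mathbf{x}^{T}A\mathbf{x} = \|A^{1/2}\mathbf{x}\|^{2}$, so $\mathbf{x}^{T}A\mathbf{x} = 0$ forces $A^{1/2}\mathbf{x} = 0$ and therefore $A\mathbf{x} = 0$. Applying this with $A = M + \Delta$ and $\mathbf{x} = \mathbf{v}$ yields $(M+\Delta)\mathbf{v} = 0$.

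Finally I would combine this with $M\mathbf{v} = 0$ to conclude $\Delta\mathbf{v} = 0$. Componentwise this reads $\Delta_{ii}v_i = 0$ for every $i$; since each $v_i = \pm 1$ is nonzero, every diagonal entry $\Delta_{ii}$ must vanish, i.e. $\Delta = 0$. This contradicts the hypothesis that $\Delta$ is nonzero, so the supposition $M + \Delta \geq 0$ fails and $M + \Delta \ngeq 0$, as claimed.

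I do not anticipate a serious obstacle in this argument; the only step requiring care is the standard fact that for a symmetric positive semidefinite matrix the zero set of the quadratic form coincides with the kernel, which I would justify via the square-root (equivalently, Cholesky) factorization rather than merely cite. The role of each hypothesis is then transparent: the tracelessness of $\Delta$ is exactly what makes the quadratic form vanish on $\mathbf{v}$, while the $\{-1,1\}$ structure is precisely what upgrades $\Delta\mathbf{v} = 0$ to $\Delta = 0$.
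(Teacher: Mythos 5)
Your proof is correct, but it takes a different route from the paper's. The paper perturbs the kernel vector explicitly: it picks an index $i$ with $\Delta_{ii}=\delta>0$, sets $\mathbf{w}=\mathbf{v}+a\,\mathbf{e}_i$, and expands $\langle\mathbf{w},(M+\Delta)\mathbf{w}\rangle=a^2(\delta+M_{ii})\pm 2\delta a$, which is negative for suitable small $a$ because the linear term does not vanish; this exhibits a concrete witness vector for $M+\Delta\ngeq 0$. You instead observe that $\mathbf{v}^{T}(M+\Delta)\mathbf{v}=\operatorname{tr}(\Delta)=0$ and invoke the rigidity of positive semidefinite symmetric matrices (quadratic form vanishing forces membership in the kernel, via the square-root factorization) to deduce $(M+\Delta)\mathbf{v}=0$, hence $\Delta\mathbf{v}=0$ and $\Delta=0$, a contradiction. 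Both arguments are sound. Yours is arguably cleaner and slightly more general: you never actually use positive semidefiniteness of $M$ (only its symmetry and $M\mathbf{v}=0$), whereas the paper's computation relies on $M_{ii}\ge 0$. What the paper's approach buys in exchange is constructiveness --- it produces an explicit vector on which the quadratic form is negative, rather than a pure contradiction.
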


\begin{proof}[Proof of Lemma \ref{lem:degree_criterion_support_lemma} ]

  Given the nature of matrix $\Delta$, at least one of its diagonal entries, say $\Delta_{ii} = \delta$, is positive and nonzero.
  Let $\mathbf{v} \in \{-1, 1\}^n$ be in $K(M)$. Let
\(
\mathbf{w} \coloneqq \mathbf{v} + a \mathbf{x},
\)
with $a \in \mathbb{R}$, and $\mathbf{x}$ be the $i$-th standard basis vector.
Consider the inner product
\begin{align}
I \coloneqq& \langle \mathbf{w}, (M + \Delta)\mathbf{w} \rangle \nonumber \\
    =& \langle \mathbf{v}, M\mathbf{v}\rangle 
    + a \langle \mathbf{v}, M\mathbf{x}\rangle 
    + a \langle \mathbf{x}, M\mathbf{v}\rangle + a^2 \langle  \mathbf{x},  M\mathbf{x}  \rangle \notag \\
    ~+&  \langle   \mathbf{v}, \Delta\mathbf{v}  \rangle + a  \langle   \mathbf{v}, \Delta\mathbf{x}  \rangle  + a \langle   \mathbf{x}, \Delta\mathbf{v}  \rangle  + a^2 \langle   \mathbf{x}, \Delta\mathbf{x}  \rangle. \label{eq:degre_criterion_abstraction_main_inner_product}
\end{align}
The scalars $\langle \, \mathbf{v}, M\mathbf{v} \,\rangle$,  $ \langle \, \mathbf{v}, M\mathbf{x}\, \rangle$ and $\langle \, \mathbf{x}, M\mathbf{v}\, \rangle$ are all 0, because  $M\mathbf{v}=0$. 
 And,  $\langle \, \mathbf{x}, M\mathbf{x}\, \rangle = M_{ii}$ and 
 $ \langle \, \mathbf{x}, \Delta\mathbf{x}\rangle = \delta$. 
 The remaining terms are 
 \begin{align}
 \langle \, \mathbf{v}, \Delta\mathbf{v}\, \rangle &= \sum_{j=1}^n\left(\mathbf{v}_j\right)^2 \Delta_{jj} = \tr(\Delta) = 0  \\
 \shortintertext{and}
 \langle \, \mathbf{v}, \Delta\mathbf{x}\, \rangle  &= \langle \, \mathbf{x}, \Delta\mathbf{v}\, \rangle =  \pm\delta,\,  \text{ if } \mathbf{v}_i = \pm1.
\end{align}
\noindent
Equation (\ref{eq:degre_criterion_abstraction_main_inner_product}) thus reduces to
\begin{equation}
\label{eq:degree-criterion-helper-lemma-final-condition}
    I=  a^2(\delta + M_{ii}) \pm 2\delta a,\,  \text{ if } \mathbf{v}_i = \pm1.
\end{equation}
Notice that all diagonal entries of the matrix $M$ are non-negative, because $M$ is positive semidefinite.
Equation (\ref{eq:degree-criterion-helper-lemma-final-condition}) therefore always has distinct roots, because $M_{ii} + \delta > 0$.
This implies that there exists $a$ for which
$I < 0$, meaning
\(
\label{eq:Q_degree_criterion_from_PPT}
M + \Delta \ngeq 0.
\)
\end{proof}

We now prove the observation.

\degreecriterionquickcheck*


\begin{proof}[Proof of Observation \ref{obs:applicability_of_degree_criterion_quick_check}]
    Let $G$ be a grid-labelled graph on $n$ vertices, and $D(G)$ and $A(G)$ be the degree and the adjacency matrices of $G$, respectively. 
    Let $\mathrm{L}(G) = D(G) \pm A(G)$ be a generic Laplacian matrix representative of the Laplacian matrices used in this paper. 
    Let the corresponding density matrix $\rho(G)$ be the normalized $\mathrm{L}(G)$.
    Then
    \[
    \mathrm{L}^{\Gamma}(G) = D^{\Gamma}(G) \pm A^{\Gamma}(G) = D(G) \pm A(G^{\Gamma}),
    \]
    which implies
    \begin{align}
    \label{eq:Q_degree_criterion_transposition}
    \mathrm{L}^{\Gamma}(G) &= D(G) + \mathrm{L}(G^{\Gamma}) - D(G^{\Gamma}) \\ \nonumber
        &= \mathrm{L}(G^{\Gamma}) + \Delta, \nonumber 
    \end{align}
    where the matrix $\Delta = D(G) - D(G^{\Gamma})$ is traceless and diagonal.
    If $\Delta$ is nonzero, then since $\mathrm{L}(G^\Gamma) \geq 0$, Lemma \ref{lem:degree_criterion_support_lemma} implies
    $
    \mathrm{L}(G^{\Gamma}) + \Delta \ngeq 0.
    $
    But $\rho(G)$ is separable and the PPT criterion requires $\rho^\Gamma(G) \geq 0$, meaning $\mathrm{L}^{\Gamma}(G) \geq 0$. This is a contradiction. Then, it must be that $\Delta = D(G) - D(G^{\Gamma}) = 0$.
    \end{proof}

\section{$Q$-Grid States}
\label{apdx:signless_Laplacian}

Proof of results stated in Section \ref{sec:signless_Laplacians} are given here.
Several supporting observations are needed for the proof of Lemma \ref{lem:singless_Laplacian_null_space_only_if_bipartite}, which is then used to prove the degree criterion.
\begin{observation}[\cite{signlessLapla}]
\label{obs:connected_q_graph_bipartite}
The least eigenvalue of the signless Laplacian of a connected graph is equal to 0 if and only if the graph is bipartite. In this case 0 is a simple eigenvalue.
\end{observation}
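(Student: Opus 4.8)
The plan is to exploit the factorization $Q = RR^T$ from Appendix~\ref{apdx:additional_graph_theory_concepts}, where $R$ is the unoriented incidence matrix. This makes $Q$ positive semidefinite, so its least eigenvalue is nonnegative and equals $0$ precisely when $K(Q)$ is nontrivial. The entire statement then reduces to a description of $K(Q)$ for a connected graph.

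First I would characterize the kernel. For any vector $\mathbf{x}$ one has $\langle \mathbf{x}, Q\mathbf{x}\rangle = \lVert R^T\mathbf{x}\rVert^2$, so $Q\mathbf{x} = 0$ if and only if $R^T\mathbf{x} = 0$. The row of $R^T$ corresponding to an edge $e = \{u, v\}$ of weight $w_e$ yields the scalar equation $\sqrt{w_e}\,(\mathbf{x}_u + \mathbf{x}_v) = 0$, i.e. $\mathbf{x}_u = -\mathbf{x}_v$. Hence $\mathbf{x} \in K(Q)$ if and only if $\mathbf{x}_u = -\mathbf{x}_v$ holds for every edge $\{u, v\}$.

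Next I would run the connectivity argument. Fixing a vertex $v_0$ and a scalar $c \coloneqq \mathbf{x}_{v_0}$, the relation $\mathbf{x}_u = -\mathbf{x}_v$ propagated along any path from $v_0$ forces $\lvert\mathbf{x}_u\rvert = \lvert c\rvert$, with the sign determined by the parity of the path length. Since the graph is connected, such a path reaches every vertex, so a nonzero $\mathbf{x} \in K(Q)$ exists exactly when this sign assignment is consistent, which holds if and only if the graph contains no odd cycle, i.e. if and only if it is bipartite. For the two explicit directions: if the graph is bipartite with parts $V_1, V_2$, the vector equal to $+1$ on $V_1$ and $-1$ on $V_2$ lies in $K(Q)$, so $0$ is an eigenvalue; conversely, a nonzero kernel vector splits the vertices by the sign of their entries into two classes across which every edge runs, giving a bipartition.

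Finally, simplicity follows from the same propagation: in the connected bipartite case, fixing $\mathbf{x}_{v_0} = c$ determines every other entry uniquely, so $K(Q)$ is one-dimensional and $0$ is a simple eigenvalue. I expect the only delicate point to be the well-definedness of the sign assignment by path parity, but this is precisely the standard equivalence between the absence of odd cycles and the existence of a proper two-colouring, so it becomes routine once the kernel description is in place.
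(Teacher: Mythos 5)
Your proof is correct. Note, however, that the paper does not prove Observation~\ref{obs:connected_q_graph_bipartite} at all: it is imported verbatim from the cited reference on signless Laplacians, so there is no in-paper argument to compare against. Your argument is the standard one from that literature and is self-contained and sound: the factorization $Q = RR^T$ gives positive semidefiniteness and reduces the least eigenvalue being $0$ to $K(Q)\neq\{\mathbf{0}\}$; the edge-wise condition $\mathbf{x}_u=-\mathbf{x}_v$ (equivalently $\langle\mathbf{x},Q\mathbf{x}\rangle=\sum_{\{u,v\}\in E}w_{uv}(\mathbf{x}_u+\mathbf{x}_v)^2=0$) characterizes the kernel; and connectivity propagates a single value $c=\mathbf{x}_{v_0}$ with parity-determined sign, yielding existence of a nonzero kernel vector exactly for bipartite graphs and a one-dimensional kernel (hence simplicity of the eigenvalue $0$) in that case. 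The one point worth making explicit in the converse direction is that a nonzero kernel vector has \emph{no} zero entries --- this follows from connectedness, since $\lvert\mathbf{x}_u\rvert=\lvert c\rvert$ for every $u$ and $c\neq 0$ --- so the sign pattern genuinely induces a bipartition; you gesture at this but it deserves a sentence. Your argument also covers positive edge weights, which is slightly more general than the unweighted setting of the cited observation and is consistent with how the paper later reuses these kernel facts (e.g.\ in Lemma~\ref{lem:null_space_of_weighted_unweighted_graph_Laplacian_conincice}).
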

Next, we deduce a property of the kernel of the signless Laplacian matrix of connected bipartite graphs.
\begin{observation}
\label{obs:signless_Laplacian_null_space_vector_nature}
For any connected bipartite graph $G$ on $n$ vertices there exists a vector $\mathbf{v} \in \{-1,1\}^{n}$ in $K[Q(G)]$.
\begin{proof}[Proof of Observation \ref{obs:signless_Laplacian_null_space_vector_nature}]   
Let the two vertex partitions in $G$ be $P_1$ and $P_2$.
From Observation \ref{obs:connected_q_graph_bipartite}, $Q(G)$ has a non-trivial kernel because $G$ is bipartite.
Suppose a vector $\mathbf{v} \in \{-1, 1\}^n$ is constructed as follows: if the $k^{\text{th}}$ vertex is in ${P_1}$ then the component $v_k=1$, otherwise $v_k=-1$.
Given that the vertices connected by any edge in $G$ belong to opposite partitions, from the definition of the incidence matrix $R$, we see that $R(G)^T \mathbf{v} = \mathbf{0}$.
Then $Q(G) \mathbf{v} = R(G) R(G)^T \mathbf{v} = \mathbf{0}$.
\end{proof}
\end{observation}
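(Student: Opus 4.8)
The plan is to leverage the factorization $Q = RR^T$ of the signless Laplacian in terms of the unoriented incidence matrix $R$ (recorded earlier in the excerpt), together with the bipartite structure of $G$, to write down an explicit $\pm 1$ vector that $R^T$ annihilates.

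First I would fix a bipartition $V = P_1 \sqcup P_2$, which exists by the hypothesis that $G$ is bipartite, and define the candidate vector $\mathbf{v} \in \{-1,1\}^n$ by setting $v_k = +1$ when the $k$-th vertex lies in $P_1$ and $v_k = -1$ when it lies in $P_2$. The purpose of this sign assignment is that the two endpoints of every edge then receive opposite signs.

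Next I would compute $R^T \mathbf{v}$ componentwise. Each column of $R$ (equivalently, each row of $R^T$) corresponds to an edge $e = \{a, b\}$ and has exactly two nonzero entries, both equal to $\sqrt{w_e}$, in the rows indexed by $a$ and $b$. Hence the $e$-th component of $R^T \mathbf{v}$ equals $\sqrt{w_e}\,(v_a + v_b)$. Since $a$ and $b$ lie in opposite partitions we have $v_a + v_b = 0$, so every component vanishes and $R^T \mathbf{v} = \mathbf{0}$. It then follows immediately that $Q(G)\mathbf{v} = R\,(R^T \mathbf{v}) = \mathbf{0}$, placing $\mathbf{v}$ in $K[Q(G)]$.

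There is no serious obstacle in this argument; the only point requiring care is aligning the incidence-matrix bookkeeping with the bipartition so that the two equal entries in each column genuinely cancel. I would also remark that connectedness is not strictly needed for the \emph{existence} of such a vector -- the construction works for any bipartite graph -- but it is precisely what guarantees, via Observation \ref{obs:connected_q_graph_bipartite}, that $0$ is a simple eigenvalue, so that this $\pm 1$ vector in fact spans the entire kernel up to scaling.
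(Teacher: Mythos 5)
Your proof is correct and follows essentially the same route as the paper's: assign $+1$ to one partition and $-1$ to the other, observe that each column of $R$ has its two nonzero entries on opposite sides of the bipartition so that $R^T\mathbf{v}=\mathbf{0}$, and conclude via $Q=RR^T$. Your explicit componentwise computation and the remark that connectedness is not needed for existence (only for simplicity of the zero eigenvalue) are accurate refinements of the same argument.
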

Finally, with another result from \cite{signlessLapla}, we derive a corollary to prove Lemma \ref{lem:singless_Laplacian_null_space_only_if_bipartite}.
\begin{observation}[\cite{signlessLapla}]
\label{obs:q_null_space_multiplicity}
In any graph, the (algebraic) multiplicity of the eigenvalue 0 of the signless Laplacian is equal
to the number of bipartite (connected) components.
\end{observation}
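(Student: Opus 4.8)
The plan is to reduce the global statement to the connected case already handled by Observation~\ref{obs:connected_q_graph_bipartite}, exploiting the block structure of the signless Laplacian over connected components. First I would record that $Q = RR^T$ is symmetric and positive semidefinite, so every eigenvalue is nonnegative and, because $Q$ is Hermitian, the algebraic multiplicity of the eigenvalue $0$ coincides with its geometric multiplicity, i.e.\ with the nullity $\dim K(Q)$. It therefore suffices to compute $\dim K(Q)$.

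Next I would use the fact that, after relabelling vertices so that those of each connected component are grouped together, the signless Laplacian becomes block diagonal, $Q(G) = Q(C_1) \oplus \cdots \oplus Q(C_k)$, where $C_1, \dots, C_k$ are the connected components of $G$. A vector lies in $K(Q(G))$ if and only if its restriction to each block lies in $K(Q(C_i))$, so the kernel splits as a direct sum and $\dim K(Q(G)) = \sum_{i=1}^{k} \dim K(Q(C_i))$.

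It then remains to evaluate $\dim K(Q(C_i))$ for a single connected component, and here Observation~\ref{obs:connected_q_graph_bipartite} does the work. If $C_i$ is bipartite, then $0$ is its least eigenvalue and is simple, so $\dim K(Q(C_i)) = 1$. If $C_i$ is not bipartite, then its least eigenvalue is strictly positive---nonnegative by positive semidefiniteness and nonzero by the same observation---so $0$ is not an eigenvalue and $\dim K(Q(C_i)) = 0$. Summing over components yields $\dim K(Q(G))$ equal to the number of bipartite connected components, which is the claim.

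I do not anticipate a genuine obstacle, since Observation~\ref{obs:connected_q_graph_bipartite} already carries the spectral content; the only points demanding care are the appeal to symmetry to identify algebraic with geometric multiplicity and the verification that the kernel respects the block decomposition. Neither is deep, but both should be stated explicitly so that the counting argument is rigorous.
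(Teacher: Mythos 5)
Your proof is correct. Note, however, that the paper does not prove this observation at all: it is imported verbatim from the cited reference on signless Laplacians, so there is no in-paper argument to compare against. Your derivation---positive semidefiniteness and symmetry to equate algebraic with geometric multiplicity, the block decomposition of $Q$ over connected components, and the connected case supplied by Observation~\ref{obs:connected_q_graph_bipartite}---is the standard proof of this fact and is sound; it is also consistent with how the paper itself later combines the connected-component decomposition with Observation~\ref{obs:connected_q_graph_bipartite} in Corollary~\ref{coro:bipartite-graph-kernel-generalization}.
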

\begin{corollary}
\label{coro:bipartite-graph-kernel-generalization}
Each connected component in a bipartite graph $G$ on $n$ vertices corresponds to a basis vector $\mathbf{v} \in \{-1, 0,1\}^n$ of $K[Q(G)]$.
\end{corollary}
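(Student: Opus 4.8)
The plan is to construct the basis vectors explicitly, one per connected component, and then settle the claim with a dimension count. First I would decompose $G$ into its connected components $C_1, \dots, C_k$, observing that each $C_i$, as a subgraph of the bipartite graph $G$, is itself bipartite. The key structural fact is that $Q(G)$ is block diagonal with respect to this decomposition: after relabelling vertices component by component, the signless Laplacian of a graph is the direct sum of the signless Laplacians of its connected components, since no edges connect distinct components. This reduces the problem to understanding each block $Q(C_i)$ individually.

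For each component $C_i$, Observation \ref{obs:signless_Laplacian_null_space_vector_nature} supplies a vector in $\{-1,1\}^{\abs{C_i}}$ lying in $K[Q(C_i)]$. I would extend this vector to $\mathbf{v}_i \in \{-1,0,1\}^n$ by assigning the value $0$ to every entry indexed by a vertex outside $C_i$. The block-diagonal form then yields $Q(G)\mathbf{v}_i = \mathbf{0}$ directly, so each $\mathbf{v}_i$ lies in $K[Q(G)]$. Since the vertex sets of distinct components are disjoint, the vectors $\mathbf{v}_1, \dots, \mathbf{v}_k$ have pairwise disjoint supports, and each is nonzero; hence they are linearly independent.

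It then remains to show that these $k$ vectors span $K[Q(G)]$, i.e.\ that $\dim K[Q(G)] = k$. This is exactly where Observation \ref{obs:q_null_space_multiplicity} is used: the algebraic multiplicity of the eigenvalue $0$ of $Q(G)$ equals the number of bipartite connected components of $G$. Because $G$ is bipartite, all $k$ of its components are bipartite, so this multiplicity is precisely $k$. As $Q(G)$ is real symmetric, its algebraic and geometric multiplicities coincide, giving $\dim K[Q(G)] = k$. Having exhibited $k$ linearly independent kernel vectors in a $k$-dimensional kernel, I conclude that $\{\mathbf{v}_1, \dots, \mathbf{v}_k\}$ is a basis, and each arises from a connected component as claimed.

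The step I expect to demand the most care is the dimension count rather than the construction: one must invoke Observation \ref{obs:q_null_space_multiplicity} to pin down $\dim K[Q(G)]$ and lean on the symmetry of $Q(G)$ to equate algebraic and geometric multiplicity. Without this, the disjoint-support vectors would only be guaranteed to form a linearly independent subset of the kernel, not a spanning set. Everything else follows routinely from the block-diagonal form of the signless Laplacian and the single-component result already established in Observation \ref{obs:signless_Laplacian_null_space_vector_nature}.
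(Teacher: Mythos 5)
Your proposal is correct and follows essentially the same route as the paper's own proof: extend the all-$\pm1$ kernel vector of each bipartite component by zeros, note linear independence from disjoint supports, and close the argument with a dimension count combining Observation \ref{obs:q_null_space_multiplicity} with the fact that $Q(G)$ is diagonalizable (real symmetric). Your explicit mention of the block-diagonal structure of $Q(G)$ is a minor presentational difference, not a different argument.
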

\begin{proof}[Proof of Corollary \ref{coro:bipartite-graph-kernel-generalization}]
Observation \ref{coro:bipartite-graph-kernel-generalization} applies to connected components, because they are connected subgraphs. 
If $G$ is not a connected graph, the vectors from Observation \ref{coro:bipartite-graph-kernel-generalization} are extended by
setting vector components to 0 for vertices not in the connected component.
Let $\mathbf{v}_k$ denote the vector associated in this way to the connected component $C_k$ of $G$. 
Then the set of  vectors $\{\mathbf{v}_k\}$  is linearly independent, because the vectors have disjoint support.

Since $Q(G)$ is diagonalizable, the algebraic and geometric multiplicities of its eigenvalues are equal \cite{matrix_analysis}.
From Observation \ref{obs:q_null_space_multiplicity} and the previous statement, the geometric multiplicity of the 0 eigenvalue of $Q(G)$ is the number of connected components in $G$, which is equal to the cardinality of $\{\mathbf{v}_k  \}$.
Suppose $\abs{\{\mathbf{v}_k  \}} = m$. 
Then we have $m$ linearly independent vectors in the $m$-dimensional kernel of $Q(G)$.
The vectors therefore span $K[Q(G)]$.
\end{proof}

The next lemma allows us to use Observation \ref{obs:applicability_of_degree_criterion_quick_check} on $Q$-graphs.
\begin{lemma}
    \label{lem:singless_Laplacian_null_space_only_if_bipartite}
    A vector $\mathbf{v} \in \{-1,1\}^n$ exists in the kernel $Q(G)$ of a graph $G$ on $n$ vertices if and only if it is bipartite.
\end{lemma}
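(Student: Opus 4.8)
The plan is to prove both directions of the biconditional, leveraging the machinery already developed in Observations \ref{obs:signless_Laplacian_null_space_vector_nature}--\ref{obs:q_null_space_multiplicity} and Corollary \ref{coro:bipartite-graph-kernel-generalization}. The subtlety is that $G$ need not be connected, so I must handle the general case rather than invoking Observation \ref{obs:connected_q_graph_bipartite} directly. The easy direction is ($\Leftarrow$): if $G$ is bipartite with vertex partition $(P_1, P_2)$, then exactly as in the proof of Observation \ref{obs:signless_Laplacian_null_space_vector_nature}, I define $\mathbf{v} \in \{-1,1\}^n$ by setting $v_k = 1$ if vertex $k$ lies in $P_1$ and $v_k = -1$ otherwise. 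Since every edge joins vertices in opposite partitions, each column of the incidence matrix $R(G)$ contributes a $+\sqrt{w}$ and a $-\sqrt{w}$ against $\mathbf{v}$, so $R(G)^T \mathbf{v} = \mathbf{0}$, whence $Q(G)\mathbf{v} = R(G)R(G)^T\mathbf{v} = \mathbf{0}$. This exhibits the required $\pm 1$ vector in the kernel, and works uniformly whether or not $G$ is connected because one may fix a global bipartition for the whole graph.

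For the forward direction ($\Rightarrow$), I would argue by contradiction: suppose $\mathbf{v} \in \{-1,1\}^n$ lies in $K[Q(G)]$ but $G$ is \emph{not} bipartite. The key structural fact is Corollary \ref{coro:bipartite-graph-kernel-generalization} together with Observation \ref{obs:q_null_space_multiplicity}: the kernel $K[Q(G)]$ is spanned by the vectors $\{\mathbf{v}_k\}$ associated to the \emph{bipartite} connected components of $G$, each supported only on the vertices of its component and having entries in $\{-1,0,1\}$. Since $G$ is not bipartite, at least one connected component $C$ is non-bipartite; by Observation \ref{obs:q_null_space_multiplicity} that component contributes nothing to the kernel, so every kernel vector vanishes identically on the vertices of $C$. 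But a $\pm 1$ vector has no zero entries, so it cannot vanish on the (nonempty) vertex set of $C$. This contradicts $\mathbf{v} \in K[Q(G)]$, establishing that $G$ must be bipartite.

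I expect the main obstacle to be purely expository rather than mathematical: being careful about the disconnected case. The cited Observation \ref{obs:connected_q_graph_bipartite} is stated only for connected graphs, so the argument must route through the decomposition of the kernel over connected components, which is precisely what Corollary \ref{coro:bipartite-graph-kernel-generalization} supplies. The crucial observation that makes the contradiction clean is that any kernel vector is a linear combination $\sum_k c_k \mathbf{v}_k$ of component-supported basis vectors, and on a non-bipartite component the corresponding coefficient is absent, forcing all kernel vectors to be zero there, which is incompatible with the no-zero-entry property of a $\{-1,1\}^n$ vector.
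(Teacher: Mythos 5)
Your proof is correct, but your forward direction (a $\pm 1$ kernel vector implies bipartiteness) takes a genuinely different route from the paper's. The paper argues directly through the incidence-matrix factorization: $Q(G)\mathbf{v}=0$ with $Q=RR^T$ gives $\norm{R^T\mathbf{v}}^2 = \mathbf{v}^T Q(G)\mathbf{v}=0$, hence $R^T\mathbf{v}=0$, which forces the endpoints of every edge to carry opposite signs, and bipartiteness then follows by citing Proposition 2.1 of \cite{signlessLapla}; this is short and essentially reads off a proper $2$-coloring from $\mathbf{v}$. You instead argue by contradiction through the spectral structure of the kernel: Observation \ref{obs:q_null_space_multiplicity} plus the component-supported basis vectors show that every element of $K[Q(G)]$ vanishes on the vertices of any non-bipartite component, which no $\{-1,1\}^n$ vector can do. Both arguments are sound; yours avoids the external proposition and makes the role of the kernel decomposition explicit, at the cost of needing Corollary \ref{coro:bipartite-graph-kernel-generalization} in a slightly stronger form than it is stated (it is phrased for bipartite $G$, whereas you need the span statement for the bipartite components of an arbitrary $G$) --- a mild extension that you correctly justify via Observation \ref{obs:q_null_space_multiplicity} and the disjoint supports of the component vectors. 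Your easy direction (bipartite implies a $\pm 1$ kernel vector) is the same as the paper's in substance: the paper sums the component vectors of Corollary \ref{coro:bipartite-graph-kernel-generalization}, while you fix a global bipartition and apply the argument of Observation \ref{obs:signless_Laplacian_null_space_vector_nature} directly; these coincide.
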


\begin{proof}[Proof of Lemma \ref{lem:singless_Laplacian_null_space_only_if_bipartite}]
Let $G$ be a bipartite graph on $n$ vertices. 
Let $\{\mathbf{v}_k\}$ be vectors derived from connected components, including isolated vertices, of $G$ as described in Corollary \ref{coro:bipartite-graph-kernel-generalization}.
Then, because the vectors $\{\mathbf{v}_k\}$ have disjoint support,  the sum
\(
 \sum_k \mathbf{v}_k \eqqcolon \mathbf{v} \in \{-1,1\}^n \text{ and } Q(G) \mathbf{v} = 0.
\)

If a vector $\mathbf{v} \in \{-1, 1\}^n$ is in  $K[Q(G)]$, then $Q(G)\mathbf{v}=0$,  meaning $R^T \mathbf{v} =0$.
It then follows from Proposition 2.1 in \cite{signlessLapla} that $G$ is bipartite.
\end{proof}

Finally, we prove the degree criterion for $Q$-graphs.

\degreecriterionqgraphs*


\begin{proof}[Proof of Theorem \ref{thm:degree_criterion_Q}]
    Using Lemma \ref{lem:singless_Laplacian_null_space_only_if_bipartite},  the proof follows from applying Observation \ref{obs:applicability_of_degree_criterion_quick_check} to $Q$-graphs. 
\end{proof}

For the proof of Observation \ref{obs:graph_surgery_q_graphs}, we assign a notion of vectors to vertices in a grid-labelled graph.
The vector of a vertex is the standard basis vector corresponding to its index.
In a grid-labelled graph, the vertices are indexed row-wise from top-left to bottom-right. 
Thus, in an $m \times n$ grid-labelled graph, the vertex $(0,0)$ is the first vertex and is assigned the standard basis vector $\mathbf{e}_1$.
The vertex $(m-1, n-1)$ is the last vertex and is assigned the vector $\mathbf{e}_{m \cdot n}$.
This is convenient because the state vector of the state $\ket{0 \, 0}$ is $\mathbf{e}_1$ and of $\ket{m-1 , n-1}$ is $\mathbf{e}_{mn}$. With this convention, we can say vertex $(i,j)$ corresponds to the state $\ket{i \, j}$.

\graphsurgeryqgraphs*

    

\begin{proof}[Proof of Observation \ref{obs:graph_surgery_q_graphs}]
Given vertex $(i, j)$ is an isolated vertex and thus a connected component, by
Corollary \ref{coro:bipartite-graph-kernel-generalization},
$\rho_Q(G)\ket{i  \, j} = 0$. 
Since $\rho_Q(G)$ is hermitian,   $\ip{\mu \, \nu}{i \, j} = 0$, which 
implies either $\ip{i}{\mu} = 0$ or $\ip{j}{\nu} = 0$.
We first consider the case $\ip{i}{\mu} = 0$, from which it follows that 
the inner product $\ip{\mu \, \nu}{i\, j_c} = 0$ for all $c$.
This means $\ket{\mu \, \nu}$ is orthogonal to states corresponding to all vertices in row $i$.

Let $C_k$ be a connected component in $G$ and $\ket{C_k} \coloneqq \mathbf{v}_k$ be the  basis vector from Corollary \ref{coro:bipartite-graph-kernel-generalization} of $K[\rho_Q(G)]$.
%
%
%
Then $\ip{\mu \, \nu}{C_k} = 0$.

Consider the vector $\ket{C'_k} \coloneqq \ket{C_k} + \ket{L}$, where  $\ket{L} \coloneqq \sum_c \lambda_c \ket{i \, j_c}$ is a linear combination of vectors of all vertices in row $i$.
%
A suitable set of scalars $\{\lambda_c\}$ can always be chosen to make $\ket{C'_k}_c = 0$ for all $c$.  
Using Corollary \ref{coro:bipartite-graph-kernel-generalization}, the vector $\ket{C'_k}$ can be interpreted as the vector of a connected component $C'_k$ that includes all vertices in $C_k$ except the ones in row $i$.
Vertices in $C'_k$ have the same relative partitioning as in $C_k$.
Further,  $\ip{\mu \, \nu}{C'_k} = 0$, because $\ip{\mu \, \nu}{L} = 0$ as $\ip{\mu \, \nu}{i\, j_c} = 0$ for all $c$, and $\ip{\mu \, \nu}{C_k} = 0$.

Let $G'$ be a grid-labelled graph with the same vertex set as $G$.
For every connected component $C_k$ in $G$, let the graph $G'$ have the connected component $C'_k$ derived from $C_k$ as described above.
Notice that the isolated vertices $\{(i_o, j_o)\}$ in $G$ remain isolated in $G'$, and that $G'$ has additional isolated vertices -- the vertices in row $i$.
The graph $G'$ thus can be produced via row surgery on $G$ with isolated vertex $(i, j)$. 
It can therefore be labelled as $G^R_{ij}$.

Depending the nature of the vectors $\{\ket{C'_k}\}$, we have three possibilities:
\begin{itemize}
    \item If the vectors $\{\ket{C'_k}\}$ are all in $\{1, 0\}^n$, then $G^R_{ij}$ is an $L$-graph.
    The kernel of $L(G^R_{ij})$ is spanned by the vectors  $\{\ket{C'_k}\}$, $\{\ket{i_o\, j_o}\}$, and $\{\ket{i, j_c}\}$ of its connected components, to all of which $\ket{\mu \, \nu}$ is orthogonal.
    Thus $\ket{\mu \, \nu}$ is in the range of $L(G^R_{ij})$ and also of $\rho_L(G^R_{ij})$.
    This case is identical to $L$-surgery.
    \item If the vectors $\{\ket{C'_k}\}$ are all in $\{1, 0, -1\}^n$, then by Corollary \ref{coro:bipartite-graph-kernel-generalization} and arguments analogous to above, the vector $\ket{\mu \, \nu}$ is in the range of $\rho_Q(G^R_{ij})$.
    \item Finally, if some vectors in $\{\ket{C'_k}\}$ are in $\{1, 0\}^n$ and others in $\{1, 0, -1\}^n$, then $G'$ is a hybrid graph.
    Graph surgery on hybrid graphs are presented in Section \ref{sec:hybrid_Laplacian}.
\end{itemize}

 It can be shown with analogous arguments that if instead $\ip{l}{j} = 0$, then $\ket{\mu \, \nu}$ is in the range of $\rho_L(G^C_{ij})$ or of $\rho_Q(G^C_{ij})$ or  of the density matrix of an analogous hybrid graph.
\end{proof}
We now show the unitary inequivalence of the $L$- and the $Q$-grid states corresponding to the same non-bipartite grid-labelled graph.

\qlgraphsunitarytransformation*


\begin{proof}[Proof of Observation \ref{obs:Q_L_graphs_unitary_transformation}]
    Let $G$ be a non-bipartite grid-labelled graph.
    The dimension of $K[L(G)]$ is the number of connected components in $G$ (see Section 3.13.5 in \cite{brouwer2011spectra}).
    From Corollary \ref{coro:bipartite-graph-kernel-generalization}, the dimension of $K[Q(G)]$ is the number of bipartite connected components in $G$.
    At least one connected component in $G$ is not bipartite.
   This means the dimensions of $K[L(G)]$ and of $K[Q(G)]$ are not equal.
    Then from the rank-nullity theorem, the ranks of $L(G)$ and of $Q(G)$ are not equal.
    Therefore, $\rho_L(G)$ and $\rho_Q(G)$ cannot be unitarily equivalent.
\end{proof}

\section{Weighted Graphs}
\label{apdx:weighted_graphs}

This section consists of proof of results stated for weighted grid-labelled graphs in the main text.

\nullspaceofweightedunweightedgraphlaplacianconincice*


\begin{proof}[Proof of Lemma \ref{lem:null_space_of_weighted_unweighted_graph_Laplacian_conincice}]
Let $G = (V, E)$ be a weighted graph and  edge weights of edges in $G$ be $\{w_1, \dots w_m \}$, where $m=\abs{E}$.
If $Q\mathbf{v} = \mathbf{0}$, then 
\begin{equation}
\label{eq:weighted_incidence_matrix_null_space}
\big[R^T\mathbf{v}\big]_i = 
    \sqrt{w_i}\left(\mathbf{v}_{i1} + \mathbf{v}_{i2}\right)
     = \mathbf{0}, \, \forall i \in \{1,\dots, m\},
\end{equation}
because $Q = RR^T$, where $R$ is the unoriented incidence matrix.
The vector components $\{\mathbf{v}_{i1}, \mathbf{v}_{i2}\}$ correspond to vertices connected by edge $e_i \in E$.
The solutions of Equation (\ref{eq:weighted_incidence_matrix_null_space}) are independent of the edge weights.
Therefore, any vector $\mathbf{v} \in K[Q(G)]$ must also be in the kernels $\{K[Q(G')]\}$ of all graphs $\{G'\}$ with the same edge and vertex sets. 
The same arguments apply to the signed Laplacian $L(G)$.
\end{proof}

\weightedqlgraphsunitarytransformation*


\begin{proof}[Proof of Corollary \ref{coro:weighted_Q_L_graphs_unitary_transformation}]
Let $G_w = (V, E)$ be a non-bipartite weighted grid-labelled graph and $G = (V, E)$ be its unweighted counterpart.
From the proof Observation \ref{obs:Q_L_graphs_unitary_transformation}, we know $\rho_L(G)$ and $\rho_Q(G)$ are not unitarily equivalent because their ranks are not equal.
According to Lemma \ref{lem:null_space_of_weighted_unweighted_graph_Laplacian_conincice}, $K[\rho_L(G)] = K[\rho_L(G_w)]$ and $K[\rho_Q(G)] = K[\rho_Q(G_w)]$.
This means that the ranks of $\rho_L(G_w)$ and $\rho_Q(G_w)$ are not equal. 
Therefore, the density matrices cannot be unitarily equivalent.

\end{proof}

\section{Hybrid Graphs}
\label{apdx:composite_graphs}

Here, we prove the results for grid-states derived from the grid-labelled hybrid graphs.
To proceed, we need a notion of incidence matrix.
The \emph{incidence matrix} of a hybrid graph $G = (V, E)$ is the $\abs{V} \times
\abs{E}$ matrix 
\(
\mathcal{R} = \bm{[}B_l \quad R_q\bm{]},
\)
where $B_l$ and $R_q$ are the unoriented and the oriented incident matrices of its $L$- and $Q$-subgraphs, respectively.
The hybrid Laplacian satisfies $\mathcal{L} = \mathcal{R}\mathcal{R}^T$.

Like in the case of $Q$-grid states, we need supporting lemmas to prove the degree criterion for NOI- and COI-graphs.
\begin{lemma}
    \label{lem:NOI_COI_kernel_connected_components}
    Each connected component in a NOI- or a COI-graph $G$ on $n$ vertices corresponds to a basis vector $\mathbf{v} \in \{-1, 0,1\}^n$ of $K[\mathcal{L}(G)]$.
\end{lemma}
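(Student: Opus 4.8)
The plan is to reduce the kernel condition to local constraints along edges and then solve them one connected component at a time. Since the hybrid Laplacian factorizes as $\mathcal{L}(G) = \mathcal{R}\mathcal{R}^T$ with $\mathcal{R} = [B_l\; R_q]$, a real vector $\mathbf{v}$ lies in $K[\mathcal{L}(G)]$ if and only if $\mathcal{R}^T\mathbf{v} = \mathbf{0}$. Reading this off column by column, the oriented block $B_l$ forces $v_a = v_b$ on every $L$-edge $\{a,b\}$, while the unoriented block $R_q$ forces $v_a = -v_b$ on every $Q$-edge $\{a,b\}$. Because each such constraint couples only two vertices of a single connected component, the kernel decomposes as a direct sum of the per-component solution spaces, spanned by vectors supported on disjoint vertex sets.

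For a NOI-graph I would invoke the fact that every connected component carries only $L$-edges or only $Q$-edges. An all-$L$ component forces a constant value, giving (up to scale) a $\{0,1\}^n$ vector exactly as for $L$-graphs; an all-$Q$ component is connected and bipartite, so Observation \ref{obs:signless_Laplacian_null_space_vector_nature} and Corollary \ref{coro:bipartite-graph-kernel-generalization} supply the required $\{-1,0,1\}^n$ kernel vector. Since a NOI-graph is a special case of a COI-graph, it suffices to settle the COI case; I keep the homogeneous reduction only because it makes the NOI situation transparent.

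For a COI-graph I would use the guaranteed bipartition $(P_1,P_2)$ directly: set $\phi(v) = +1$ for $v\in P_1$ and $\phi(v) = -1$ for $v\in P_2$, and define $\mathbf{v}_k$ to equal $\phi$ on the component $C_k$ and $0$ elsewhere. Then $\mathbf{v}_k \in \{-1,0,1\}^n$, and the constraints hold: every $Q$-edge crosses the partition, so its endpoints receive opposite signs, while every $L$-edge stays within a partition by the COI condition, so its endpoints receive equal signs. Hence each $\mathbf{v}_k$ lies in $K[\mathcal{L}(G)]$. It remains to show these vectors form a basis: linear independence is immediate from their disjoint supports, and for spanning one notes that within a single component a spanning tree propagates the value at any chosen vertex to all others through the sign rules above, so the per-component solution space is at most one-dimensional; combined with the explicit generator $\mathbf{v}_k$ it is exactly one-dimensional, and the direct-sum decomposition then forces $\{\mathbf{v}_k\}$ to span $K[\mathcal{L}(G)]$.

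I expect the genuine obstacle to be the \emph{consistency} of this assignment, i.e.\ checking that propagating signs around cycles never contradicts itself (equivalently, that the bipartition extends to vertices touched only by $L$-edges). This is precisely where the NOI/COI hypotheses are indispensable: a bipartite $Q$-subgraph rules out odd cycles of $Q$-edges, and the same-partition condition on $L$-edges guarantees that inserting $L$-edges cannot create a sign conflict. Without these conditions a single inconsistent cycle would force the component value to $0$ and collapse the corresponding kernel dimension, so the careful point of the argument is to show that under NOI and COI no such cycle can arise.
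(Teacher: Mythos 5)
Your proof is correct, and your construction of the kernel vectors (the $\pm 1$ bipartition signing restricted to each connected component, extended by zero) is exactly the one the paper intends: its proof of Lemma \ref{lem:NOI_COI_kernel_connected_components} is a one-line instruction to adapt the proof of Corollary \ref{coro:bipartite-graph-kernel-generalization}, which builds the same vectors via Observation \ref{obs:signless_Laplacian_null_space_vector_nature} and gets linear independence from disjoint supports, just as you do. Where you genuinely diverge is the spanning step. The paper's corollary closes the dimension count by citing Observation \ref{obs:q_null_space_multiplicity} (the algebraic multiplicity of the eigenvalue $0$ of the signless Laplacian equals the number of bipartite components) together with diagonalizability; that spectral fact is stated for signless Laplacians of ordinary graphs, so ``adapting'' it to the hybrid Laplacian $\mathcal{L}$ is precisely the point the paper leaves implicit. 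You avoid this entirely by working from the factorization $\mathcal{L}=\mathcal{R}\mathcal{R}^T$: the column-by-column constraints $v_a=v_b$ on $L$-edges and $v_a=-v_b$ on $Q$-edges propagate along a spanning tree of each component, so each component contributes at most one kernel dimension, and your explicit generator shows it is exactly one. This is more elementary and self-contained, and arguably the cleaner way to justify the lemma for hybrid Laplacians. Two small remarks: your reading of $\mathcal{R}=[B_l\ R_q]$ with $B_l$ oriented and $R_q$ unoriented is the one that makes $\mathcal{L}=\mathcal{R}\mathcal{R}^T$ hold, even though the paper's wording swaps the adjectives; and your closing worry about cycle consistency is already discharged by your explicit assignment $\phi$, which verifies every edge constraint simultaneously, so no separate consistency argument is needed.
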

\begin{proof}[Proof of Lemma \ref{lem:NOI_COI_kernel_connected_components}]
    The proof follows for adapting the arguments in the proof of Corollary \ref{coro:bipartite-graph-kernel-generalization} to NOI- and COI-graphs.
\end{proof}

\begin{lemma}
    \label{lem:null_space_vector_COI_NOI}
    For any NOI- or COI-graph $G$ on $n$ vertices there exists a vector $\mathbf{v} \in \{-1,1\}^{n}$ in the kernel of $\mathcal{L}(G)$.
\end{lemma}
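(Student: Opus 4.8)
The plan is to reduce the claim to the already-proven structural description of the kernel of the hybrid Laplacian, namely Lemma \ref{lem:NOI_COI_kernel_connected_components}, exactly as the proof of Lemma \ref{lem:singless_Laplacian_null_space_only_if_bipartite} reduced the analogous statement for $Q$-graphs to Corollary \ref{coro:bipartite-graph-kernel-generalization}. First I would decompose $G$ into its connected components $C_1, \dots, C_m$, including isolated vertices. By Lemma \ref{lem:NOI_COI_kernel_connected_components}, each $C_k$ yields a basis vector $\mathbf{v}_k \in \{-1,0,1\}^n$ of $K[\mathcal{L}(G)]$, where the nonzero entries of $\mathbf{v}_k$ are exactly the vertices of $C_k$, carrying values in $\{-1,1\}$, and all other entries are $0$. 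Since distinct components share no vertices, the supports of the $\mathbf{v}_k$ are disjoint.

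The key step is then to form the sum $\mathbf{v} \coloneqq \sum_{k=1}^m \mathbf{v}_k$. Because the supports are disjoint and every vertex lies in exactly one connected component, each coordinate of $\mathbf{v}$ receives a contribution from precisely one $\mathbf{v}_k$, and that contribution is $\pm 1$; hence $\mathbf{v} \in \{-1,1\}^n$ with no zero entries. Moreover $\mathcal{L}(G)\mathbf{v} = \sum_k \mathcal{L}(G)\mathbf{v}_k = \mathbf{0}$, since each $\mathbf{v}_k \in K[\mathcal{L}(G)]$ and the kernel is a linear subspace. This produces the desired $\mathbf{v} \in \{-1,1\}^n$ in $K[\mathcal{L}(G)]$.

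I do not expect a serious obstacle here: the whole difficulty has been front-loaded into Lemma \ref{lem:NOI_COI_kernel_connected_components}, which guarantees that each connected component — whether it sits in the $L$-subgraph, the $Q$-subgraph, or straddles the two under the COI restriction — contributes a $\pm 1$-valued eigenvector supported on that component. The only point requiring a little care is to confirm that the component vectors really have entries drawn from $\{-1,0,1\}$ (so that the disjoint sum lands in $\{-1,1\}^n$ rather than acquiring larger magnitudes), but this is precisely what Lemma \ref{lem:NOI_COI_kernel_connected_components} asserts. Thus the proof is a short two-line argument: invoke Lemma \ref{lem:NOI_COI_kernel_connected_components} to get the component vectors, then take their sum. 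This mirrors the $Q$-graph argument almost verbatim, with Lemma \ref{lem:NOI_COI_kernel_connected_components} playing the role Corollary \ref{coro:bipartite-graph-kernel-generalization} played there.
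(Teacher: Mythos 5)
Your proposal is correct and follows essentially the same route as the paper: the paper's proof of Lemma \ref{lem:null_space_vector_COI_NOI} simply invokes Lemma \ref{lem:NOI_COI_kernel_connected_components} and then repeats the disjoint-support summation argument from the proof of Lemma \ref{lem:singless_Laplacian_null_space_only_if_bipartite}, which is exactly the sum $\mathbf{v} = \sum_k \mathbf{v}_k$ you construct. Your write-up just makes explicit the details the paper leaves as ``arguments analogous to'' the $Q$-graph case.
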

\begin{proof}[Proof of Lemma \ref{lem:null_space_vector_COI_NOI}]
With Lemma \ref{lem:NOI_COI_kernel_connected_components}, arguments analogous to the ones given in the proof Lemma \ref{lem:singless_Laplacian_null_space_only_if_bipartite} prove this lemma.
%

%
\end{proof}

\degreecriterioncompositegraphs*


\begin{proof}[Proof of Theorem \ref{thm:degree_criterion_composite_graphs_type_1}]
Using Lemma \ref{lem:null_space_vector_COI_NOI}, the proof follows from applying Observation \ref{obs:applicability_of_degree_criterion_quick_check} to a NOI- or a COI-graph.
\end{proof}

We now prove the claim that every COI graph has a proxy graph.

 \equivalentnonoverlappingincidencegraph*
 

\begin{proof}[Proof of Observation \ref{obs:equivalent_non_overlapping_incidence_graph}]
Let $G$ be a COI-graph with two vertex partitions $P_1$ and $P_2$ determined by its $Q$-subgraph.

First, note that any connected component that contains a $Q$-edge must contain at least one vertex in partition $P_1$, since $Q$-edges connect vertices in opposite partitions. 
Second, by definition, the pair of vertices connected by any $L$-edge in $G$ must both be in the same partition. 
Using these observations, we can construct the proxy graph as follows:

\begin{itemize}
\item For each connected component that contains a $Q$-edge, choose two designated vertices -- one in partition $P_1$ and the other is partition $P_2$. 
\item Then, for all vertices in the graph that have both an $L$-edge and a $Q$-edge incident, remove the $L$-edge.
\item If a vertex belonging to partition $P_1$ (resp. $P_2$) is isolated from its previous connected component, reconnect it with a $Q$-edge to the corresponding designated vertex in partition $P_2$ (resp. $P_1$). 
\end{itemize}

The above steps not only yield a NOI-graph, say $G'$, but also guarantee that the relative vertex partitioning of the vertices in $G$ and in $G'$ remain identical, and that all connected components in $G'$ have the same vertices as in their counterpart in $G$. 
Therefore, the vectors associated to connected components in $G$ and to connected components in $G'$ are identical.
Then, from Lemma \ref{lem:NOI_COI_kernel_connected_components}, it follows that the kernels of $\mathcal{L}(G)$ and of $\mathcal{L}(G')$ are identical.
\end{proof}

Finally, we show that in the case of hybrid graphs as well the edge weights alone do not affect the kernel of the hybrid laplacian.

\weightedcompositegraphsidenticalkernels*

\begin{proof}[Proof of Lemma \ref{lem:weighted_composite_graphs_identical_kernels}]
Let $G$ be a weighted hybrid graph and $\mathcal{L}$ be its hybrid Laplacian matrix.
Its incidence matrix is $\mathcal{R} = \bm{[}B_l \quad R_q\bm{]}$, where $B_l$ and $R_q$ are the signed and the signed Laplacian matrices of its $L$- and $Q$-subgraphs, respectively.
Since $\mathcal{L} = \mathcal{R}\mathcal{R}^T$, by the same arguments as in the proof of Lemma \ref{lem:null_space_of_weighted_unweighted_graph_Laplacian_conincice}, the solutions to the equation $\mathcal{L} \mathbf{v} = 0$ are independent of the edge weights.
\end{proof}

\section{Construction of Bound Entangled States}
\label{apdx:bound_entangled_states}

The proofs of two results related to construction of bound entangled states are given here.

\degreecriterionsufficient*


\begin{proof}[Proof of Observation \ref{obs:degree_criterion_sufficient}]
Normalization is ignored as it has no effect on the definiteness of a matrix. Let $G$ be a $Q$-graph and $G^\Gamma$ be its partial transpose. Given $D(G) = D(G^\Gamma)$,

\begin{align}
    D(G) &= Q(G) - A(G) = D(G^\Gamma). \\
    \intertext{Thus,}
     Q(G) &= D(G^\Gamma) + A(G). \nonumber \\
    \implies Q^\Gamma(G) &= D^\Gamma(G^\Gamma) + A^\Gamma(G) \nonumber \\
    &= D(G^\Gamma) + A(G^\Gamma) \nonumber \\
    &= Q(G^\Gamma) \geq 0.
\end{align}
The same arguments apply to weighted and to hybrid graphs.
\end{proof}

\crosshatchgraphsareentangled*


\begin{proof}[Proof of Theorem \ref{thm:cross_hatch_graphs_are_entangled}]

An $m \times n$ cross-hatch $L$-graph is entangled for all $m, n \geq 3$ \cite{jh_phdthesis}.
Graph surgery procedures on $Q$- and $L$-graphs only differ in the STITCH step, which is not required for graph surgery on cross-hatch graphs, because connected components in cross-hatch graphs are either isolated vertices or single edges.
Therefore, the proof for $L$-graphs is sufficient for $Q$-graphs.

By Lemma \ref{lem:null_space_of_weighted_unweighted_graph_Laplacian_conincice}, weighted cross-hatch $L$- and $Q$-graphs are entangled.
Since graph surgery on hybrid graphs is based on $L$- and $Q$-surgeries, hybrid cross-hatch  graphs are entangled.
All cross-hatch graphs satisfy the degree criterion.
Thus they are bound entangled.
\end{proof}

\section{Hypergraphs}
\label{apdx:hypergraphs}

The degree criterion for hypergraph grid-states is proved below.

\hypergraphdegreecriterion*


\begin{proof}[Proof of Theorem \ref{thm:hypergraph_degree_criteion}]
Let $H$ be a hypergraph on $n$ vertices and $G$ be its graph. From Equation (\ref{eq:hypergraph-laplacian-with-offset-matrix})
\[
L(H) = Q(G) - O(H),
\]
where $O(H)$ is the offset matrix.
Then
\begin{align}
    L^\Gamma(H) &= Q^\Gamma(G) - O^\Gamma(H)  \nonumber \\
    &= Q(G^\Gamma) + \Delta - O(H),
\end{align}
where $\Delta = D(G) - D(G^\Gamma)$, and the second equality follows from applying Equation (\ref{eq:Q_degree_criterion_transposition}) to $G$.

The offset matrix $O(H)$ is positive semidefinite because it is a real, diagonal matrix with non-negative diagonal entries.
And from the PPT criterion, $L^\Gamma(H) \geq 0$, because $H$ represents a separable state.
This means
\begin{equation}
L^\Gamma(H) + O(H) = Q(G^\Gamma) + \Delta  \geq 0.
\end{equation}
Since $G^\Gamma$ is bipartite, from Lemma \ref{lem:singless_Laplacian_null_space_only_if_bipartite} and Lemma \ref{lem:null_space_of_weighted_unweighted_graph_Laplacian_conincice}, there exists a vector $\mathbf{v} \in \{-1, 1\}^n$ in $K[Q(G^\Gamma)]$. The matrix $\Delta$ is traceless and diagonal matrix. Thus, from Lemma \ref{lem:degree_criterion_support_lemma}, the matrix
\(
Q(G^\Gamma) + \Delta \ngeq 0.
\)
This is a contradiction. 
Therefore,  $\Delta = D(G) - D(G^\Gamma) = 0$.
\end{proof}

\onecolumngrid
\bibliographystyle{apsrev4-2}
\bibliography{ref}

\begin{thebibliography}{23}%
\makeatletter
\providecommand \@ifxundefined [1]{%
 \@ifx{#1\undefined}
}%
\providecommand \@ifnum [1]{%
 \ifnum #1\expandafter \@firstoftwo
 \else \expandafter \@secondoftwo
 \fi
}%
\providecommand \@ifx [1]{%
 \ifx #1\expandafter \@firstoftwo
 \else \expandafter \@secondoftwo
 \fi
}%
\providecommand \natexlab [1]{#1}%
\providecommand \enquote  [1]{``#1''}%
\providecommand \bibnamefont  [1]{#1}%
\providecommand \bibfnamefont [1]{#1}%
\providecommand \citenamefont [1]{#1}%
\providecommand \href@noop [0]{\@secondoftwo}%
\providecommand \href [0]{\begingroup \@sanitize@url \@href}%
\providecommand \@href[1]{\@@startlink{#1}\@@href}%
\providecommand \@@href[1]{\endgroup#1\@@endlink}%
\providecommand \@sanitize@url [0]{\catcode `\\12\catcode `\$12\catcode
  `\&12\catcode `\#12\catcode `\^12\catcode `\_12\catcode `\%12\relax}%
\providecommand \@@startlink[1]{}%
\providecommand \@@endlink[0]{}%
\providecommand \url  [0]{\begingroup\@sanitize@url \@url }%
\providecommand \@url [1]{\endgroup\@href {#1}{\urlprefix }}%
\providecommand \urlprefix  [0]{URL }%
\providecommand \Eprint [0]{\href }%
\providecommand \doibase [0]{https://doi.org/}%
\providecommand \selectlanguage [0]{\@gobble}%
\providecommand \bibinfo  [0]{\@secondoftwo}%
\providecommand \bibfield  [0]{\@secondoftwo}%
\providecommand \translation [1]{[#1]}%
\providecommand \BibitemOpen [0]{}%
\providecommand \bibitemStop [0]{}%
\providecommand \bibitemNoStop [0]{.\EOS\space}%
\providecommand \EOS [0]{\spacefactor3000\relax}%
\providecommand \BibitemShut  [1]{\csname bibitem#1\endcsname}%
\let\auto@bib@innerbib\@empty
\bibitem [{\citenamefont {Wootters}\ and\ \citenamefont
  {Leng}(1998)}]{entanglement_as_a_resource}%
  \BibitemOpen
  \bibfield  {author} {\bibinfo {author} {\bibfnamefont {W.~K.}\ \bibnamefont
  {Wootters}}\ and\ \bibinfo {author} {\bibfnamefont {W.~S.}\ \bibnamefont
  {Leng}},\ }\href {http://www.jstor.org/stable/55007} {\bibfield  {journal}
  {\bibinfo  {journal} {Philosophical Transactions: Mathematical, Physical and
  Engineering Sciences}\ }\textbf {\bibinfo {volume} {356}},\ \bibinfo {pages}
  {1717} (\bibinfo {year} {1998})}\BibitemShut {NoStop}%
\bibitem [{\citenamefont {Bruß}\ \emph {et~al.}(2002)\citenamefont {Bruß},
  \citenamefont {Cirac}, \citenamefont {Horodecki}, \citenamefont {Hulpke},
  \citenamefont {Kraus}, \citenamefont {Lewenstein},\ and\ \citenamefont
  {Sanpera}}]{bruss_separability_problem}%
  \BibitemOpen
  \bibfield  {author} {\bibinfo {author} {\bibfnamefont {D.}~\bibnamefont
  {Bruß}}, \bibinfo {author} {\bibfnamefont {J.~I.}\ \bibnamefont {Cirac}},
  \bibinfo {author} {\bibfnamefont {P.}~\bibnamefont {Horodecki}}, \bibinfo
  {author} {\bibfnamefont {F.}~\bibnamefont {Hulpke}}, \bibinfo {author}
  {\bibfnamefont {B.}~\bibnamefont {Kraus}}, \bibinfo {author} {\bibfnamefont
  {M.}~\bibnamefont {Lewenstein}},\ and\ \bibinfo {author} {\bibfnamefont
  {A.}~\bibnamefont {Sanpera}},\ }\href
  {https://doi.org/10.1080/09500340110105975} {\bibfield  {journal} {\bibinfo
  {journal} {Journal of Modern Optics}\ }\textbf {\bibinfo {volume} {49}},\
  \bibinfo {pages} {1399} (\bibinfo {year} {2002})},\ \Eprint
  {https://arxiv.org/abs/https://doi.org/10.1080/09500340110105975}
  {https://doi.org/10.1080/09500340110105975} \BibitemShut {NoStop}%
\bibitem [{\citenamefont {Gurvits}(2003)}]{Gurvits}%
  \BibitemOpen
  \bibfield  {author} {\bibinfo {author} {\bibfnamefont {L.}~\bibnamefont
  {Gurvits}},\ }in\ \href {https://doi.org/10.1145/780542.780545} {\emph
  {\bibinfo {booktitle} {Proceedings of the Thirty-fifth Annual ACM Symposium
  on Theory of Computing}}},\ \bibinfo {series and number} {STOC '03}\
  (\bibinfo  {publisher} {ACM},\ \bibinfo {address} {New York, NY, USA},\
  \bibinfo {year} {2003})\ pp.\ \bibinfo {pages} {10--19}\BibitemShut {NoStop}%
\bibitem [{\citenamefont {Gharibian}(2010)}]{np_hard_gharibian}%
  \BibitemOpen
  \bibfield  {author} {\bibinfo {author} {\bibfnamefont {S.}~\bibnamefont
  {Gharibian}},\ }\href@noop {} {\bibfield  {journal} {\bibinfo  {journal}
  {Quantum Info. Comput.}\ }\textbf {\bibinfo {volume} {10}},\ \bibinfo {pages}
  {343–360} (\bibinfo {year} {2010})}\BibitemShut {NoStop}%
\bibitem [{\citenamefont {Braunstein}\ \emph {et~al.}(2006)\citenamefont
  {Braunstein}, \citenamefont {Ghosh}, \citenamefont {Mansour}, \citenamefont
  {Severini},\ and\ \citenamefont {Wilson}}]{Braunstein2006}%
  \BibitemOpen
  \bibfield  {author} {\bibinfo {author} {\bibfnamefont {S.~L.}\ \bibnamefont
  {Braunstein}}, \bibinfo {author} {\bibfnamefont {S.}~\bibnamefont {Ghosh}},
  \bibinfo {author} {\bibfnamefont {T.}~\bibnamefont {Mansour}}, \bibinfo
  {author} {\bibfnamefont {S.}~\bibnamefont {Severini}},\ and\ \bibinfo
  {author} {\bibfnamefont {R.~C.}\ \bibnamefont {Wilson}},\ }\href
  {https://doi.org/10.1103/PhysRevA.73.012320} {\bibfield  {journal} {\bibinfo
  {journal} {Phys. Rev. A}\ }\textbf {\bibinfo {volume} {73}},\ \bibinfo
  {pages} {012320} (\bibinfo {year} {2006})}\BibitemShut {NoStop}%
\bibitem [{\citenamefont {Lockhart}\ \emph {et~al.}(2018)\citenamefont
  {Lockhart}, \citenamefont {G{\"{u}}hne},\ and\ \citenamefont
  {Severini}}]{Lockhart2018}%
  \BibitemOpen
  \bibfield  {author} {\bibinfo {author} {\bibfnamefont {J.}~\bibnamefont
  {Lockhart}}, \bibinfo {author} {\bibfnamefont {O.}~\bibnamefont
  {G{\"{u}}hne}},\ and\ \bibinfo {author} {\bibfnamefont {S.}~\bibnamefont
  {Severini}},\ }\bibfield  {journal} {\bibinfo  {journal} {Physical Review A}\
  }\href {https://doi.org/10.1103/PhysRevA.97.062340}
  {10.1103/PhysRevA.97.062340} (\bibinfo {year} {2018}),\ \Eprint
  {https://arxiv.org/abs/1705.09261} {arXiv:1705.09261} \BibitemShut {NoStop}%
\bibitem [{\citenamefont {Lockhart}(2019)}]{jh_phdthesis}%
  \BibitemOpen
  \bibfield  {author} {\bibinfo {author} {\bibfnamefont {J.}~\bibnamefont
  {Lockhart}},\ }\emph {\bibinfo {title} {Combinatorial Structures in Quantum
  Information}},\ \href@noop {} {Ph.D. thesis},\ \bibinfo  {school} {University
  College London}, \bibinfo {address} {London} (\bibinfo {year}
  {2019})\BibitemShut {NoStop}%
\bibitem [{\citenamefont {Hastrup}\ \emph {et~al.}(2021)\citenamefont
  {Hastrup}, \citenamefont {Park}, \citenamefont {Brask}, \citenamefont
  {Filip},\ and\ \citenamefont {Andersen}}]{hastrup_measurement-free_2021}%
  \BibitemOpen
  \bibfield  {author} {\bibinfo {author} {\bibfnamefont {J.}~\bibnamefont
  {Hastrup}}, \bibinfo {author} {\bibfnamefont {K.}~\bibnamefont {Park}},
  \bibinfo {author} {\bibfnamefont {J.~B.}\ \bibnamefont {Brask}}, \bibinfo
  {author} {\bibfnamefont {R.}~\bibnamefont {Filip}},\ and\ \bibinfo {author}
  {\bibfnamefont {U.~L.}\ \bibnamefont {Andersen}},\ }\href
  {https://doi.org/10.1038/s41534-020-00353-3} {\bibfield  {journal} {\bibinfo
  {journal} {npj Quantum Information}\ }\textbf {\bibinfo {volume} {7}},\
  \bibinfo {pages} {17} (\bibinfo {year} {2021})}\BibitemShut {NoStop}%
\bibitem [{\citenamefont {Diestel}(2010)}]{diestel2010}%
  \BibitemOpen
  \bibfield  {author} {\bibinfo {author} {\bibfnamefont {R.}~\bibnamefont
  {Diestel}},\ }\href@noop {} {\emph {\bibinfo {title} {Graph Theory}}},\
  Graduate Texts in Mathematics\ (\bibinfo  {publisher} {Springer Berlin
  Heidelberg},\ \bibinfo {year} {2010})\BibitemShut {NoStop}%
\bibitem [{\citenamefont {Horodecki}(1997)}]{Horodecki1997}%
  \BibitemOpen
  \bibfield  {author} {\bibinfo {author} {\bibfnamefont {P.}~\bibnamefont
  {Horodecki}},\ }\bibfield  {journal} {\bibinfo  {journal} {Physics Letters,
  Section A: General, Atomic and Solid State Physics}\ }\href
  {https://doi.org/10.1016/S0375-9601(97)00416-7}
  {10.1016/S0375-9601(97)00416-7} (\bibinfo {year} {1997})\BibitemShut
  {NoStop}%
\bibitem [{\citenamefont {Horodecki}\ \emph {et~al.}(1996)\citenamefont
  {Horodecki}, \citenamefont {Horodecki},\ and\ \citenamefont
  {Horodecki}}]{Horodecki1996}%
  \BibitemOpen
  \bibfield  {author} {\bibinfo {author} {\bibfnamefont {M.}~\bibnamefont
  {Horodecki}}, \bibinfo {author} {\bibfnamefont {P.}~\bibnamefont
  {Horodecki}},\ and\ \bibinfo {author} {\bibfnamefont {R.}~\bibnamefont
  {Horodecki}},\ }\bibfield  {journal} {\bibinfo  {journal} {Physics Letters,
  Section A: General, Atomic and Solid State Physics}\ }\href
  {https://doi.org/10.1016/S0375-9601(96)00706-2}
  {10.1016/S0375-9601(96)00706-2} (\bibinfo {year} {1996})\BibitemShut
  {NoStop}%
\bibitem [{\citenamefont {Mohar}(1997)}]{Mohar1997}%
  \BibitemOpen
  \bibfield  {author} {\bibinfo {author} {\bibfnamefont {B.}~\bibnamefont
  {Mohar}},\ }\bibinfo {title} {Some applications of laplace eigenvalues of
  graphs},\ in\ \href {https://doi.org/10.1007/978-94-015-8937-6_6} {\emph
  {\bibinfo {booktitle} {Graph Symmetry: Algebraic Methods and
  Applications}}},\ \bibinfo {editor} {edited by\ \bibinfo {editor}
  {\bibfnamefont {G.}~\bibnamefont {Hahn}}\ and\ \bibinfo {editor}
  {\bibfnamefont {G.}~\bibnamefont {Sabidussi}}}\ (\bibinfo  {publisher}
  {Springer Netherlands},\ \bibinfo {address} {Dordrecht},\ \bibinfo {year}
  {1997})\ pp.\ \bibinfo {pages} {225--275}\BibitemShut {NoStop}%
\bibitem [{\citenamefont {Hou}\ \emph {et~al.}(2003)\citenamefont {Hou},
  \citenamefont {Li},\ and\ \citenamefont {Pan}}]{signed_graphs}%
  \BibitemOpen
  \bibfield  {author} {\bibinfo {author} {\bibfnamefont {Y.}~\bibnamefont
  {Hou}}, \bibinfo {author} {\bibfnamefont {J.}~\bibnamefont {Li}},\ and\
  \bibinfo {author} {\bibfnamefont {Y.}~\bibnamefont {Pan}},\ }\href
  {https://doi.org/10.1080/0308108031000053611} {\bibfield  {journal} {\bibinfo
   {journal} {Linear \& Multilinear Algebra}\ }\textbf {\bibinfo {volume}
  {51}},\ \bibinfo {pages} {21} (\bibinfo {year} {2003})}\BibitemShut {NoStop}%
\bibitem [{\citenamefont {Apers}\ and\ \citenamefont
  {de~Wolf}(2020)}]{graph_sparsification}%
  \BibitemOpen
  \bibfield  {author} {\bibinfo {author} {\bibfnamefont {S.}~\bibnamefont
  {Apers}}\ and\ \bibinfo {author} {\bibfnamefont {R.}~\bibnamefont
  {de~Wolf}},\ }in\ \href {https://doi.org/10.1109/FOCS46700.2020.00065} {\emph
  {\bibinfo {booktitle} {2020 IEEE 61st Annual Symposium on Foundations of
  Computer Science (FOCS)}}}\ (\bibinfo {year} {2020})\ pp.\ \bibinfo {pages}
  {637--648}\BibitemShut {NoStop}%
\bibitem [{\citenamefont {Chung}(1993)}]{chung_hypergraph_matrix}%
  \BibitemOpen
  \bibfield  {author} {\bibinfo {author} {\bibfnamefont {F.}~\bibnamefont
  {Chung}},\ }\href {https://doi.org/https://doi.org/10.1090/dimacs/010}
  {\bibfield  {journal} {\bibinfo  {journal} {DIMACS Series in Discrete
  Mathematics and Theoretical Computer Science}\ }\textbf {\bibinfo {volume}
  {10}},\ \bibinfo {pages} {21} (\bibinfo {year} {1993})}\BibitemShut {NoStop}%
\bibitem [{\citenamefont {Banerjee}(2020)}]{hypergraph_matrix}%
  \BibitemOpen
  \bibfield  {author} {\bibinfo {author} {\bibfnamefont {A.}~\bibnamefont
  {Banerjee}},\ }\bibfield  {journal} {\bibinfo  {journal} {Linear Algebra and
  its Applications}\ }\href
  {https://doi.org/https://doi.org/10.1016/j.laa.2020.01.012}
  {https://doi.org/10.1016/j.laa.2020.01.012} (\bibinfo {year}
  {2020})\BibitemShut {NoStop}%
\bibitem [{\citenamefont {Rodri´guez}(2002)}]{adjacency_hypergraph}%
  \BibitemOpen
  \bibfield  {author} {\bibinfo {author} {\bibfnamefont {J.}~\bibnamefont
  {Rodri´guez}},\ }\href {https://doi.org/10.1080/03081080290011692}
  {\bibfield  {journal} {\bibinfo  {journal} {Linear and Multilinear Algebra}\
  }\textbf {\bibinfo {volume} {50}},\ \bibinfo {pages} {1} (\bibinfo {year}
  {2002})},\ \Eprint
  {https://arxiv.org/abs/https://doi.org/10.1080/03081080290011692}
  {https://doi.org/10.1080/03081080290011692} \BibitemShut {NoStop}%
\bibitem [{\citenamefont {Hu}\ and\ \citenamefont
  {Qi}(2015)}]{hypergraph_tensor}%
  \BibitemOpen
  \bibfield  {author} {\bibinfo {author} {\bibfnamefont {S.}~\bibnamefont
  {Hu}}\ and\ \bibinfo {author} {\bibfnamefont {L.}~\bibnamefont {Qi}},\ }\href
  {https://doi.org/10.1007/s10878-013-9596-x} {\bibfield  {journal} {\bibinfo
  {journal} {Journal of Combinatorial Optimization}\ }\textbf {\bibinfo
  {volume} {29}},\ \bibinfo {pages} {331} (\bibinfo {year} {2015})}\BibitemShut
  {NoStop}%
\bibitem [{\citenamefont {Bondy}\ and\ \citenamefont
  {Murty}(2009)}]{bondy2009graph}%
  \BibitemOpen
  \bibfield  {author} {\bibinfo {author} {\bibfnamefont {A.}~\bibnamefont
  {Bondy}}\ and\ \bibinfo {author} {\bibfnamefont {U.}~\bibnamefont {Murty}},\
  }\href@noop {} {\emph {\bibinfo {title} {Graph Theory}}},\ Graduate Texts in
  Mathematics\ (\bibinfo  {publisher} {Springer London},\ \bibinfo {year}
  {2009})\BibitemShut {NoStop}%
\bibitem [{\citenamefont {Chung}(1997)}]{chung1997spectral}%
  \BibitemOpen
  \bibfield  {author} {\bibinfo {author} {\bibfnamefont {F.}~\bibnamefont
  {Chung}},\ }\href@noop {} {\emph {\bibinfo {title} {Spectral graph theory}}}\
  (\bibinfo  {publisher} {Published for the Conference Board of the
  mathematical sciences by the American Mathematical Society},\ \bibinfo
  {address} {Providence, R.I},\ \bibinfo {year} {1997})\BibitemShut {NoStop}%
\bibitem [{\citenamefont {Cvetković}\ \emph {et~al.}(2007)\citenamefont
  {Cvetković}, \citenamefont {Rowlinson},\ and\ \citenamefont
  {Simić}}]{signlessLapla}%
  \BibitemOpen
  \bibfield  {author} {\bibinfo {author} {\bibfnamefont {D.}~\bibnamefont
  {Cvetković}}, \bibinfo {author} {\bibfnamefont {P.}~\bibnamefont
  {Rowlinson}},\ and\ \bibinfo {author} {\bibfnamefont {S.~K.}\ \bibnamefont
  {Simić}},\ }\href
  {https://doi.org/https://doi.org/10.1016/j.laa.2007.01.009} {\bibfield
  {journal} {\bibinfo  {journal} {Linear Algebra and its Applications}\
  }\textbf {\bibinfo {volume} {423}},\ \bibinfo {pages} {155 } (\bibinfo {year}
  {2007})},\ \bibinfo {note} {special Issue devoted to papers presented at the
  Aveiro Workshop on Graph Spectra}\BibitemShut {NoStop}%
\bibitem [{\citenamefont {Meyer}(2000)}]{matrix_analysis}%
  \BibitemOpen
  \bibfield  {author} {\bibinfo {author} {\bibfnamefont {C.~D.}\ \bibnamefont
  {Meyer}},\ }\href@noop {} {\emph {\bibinfo {title} {Matrix Analysis and
  Applied Linear Algebra}}}\ (\bibinfo  {publisher} {Society for Industrial and
  Applied Mathematics},\ \bibinfo {address} {USA},\ \bibinfo {year}
  {2000})\BibitemShut {NoStop}%
\bibitem [{\citenamefont {Brouwer}\ and\ \citenamefont
  {Haemers}(2011)}]{brouwer2011spectra}%
  \BibitemOpen
  \bibfield  {author} {\bibinfo {author} {\bibfnamefont {A.}~\bibnamefont
  {Brouwer}}\ and\ \bibinfo {author} {\bibfnamefont {W.}~\bibnamefont
  {Haemers}},\ }\href@noop {} {\emph {\bibinfo {title} {Spectra of Graphs}}},\
  Universitext\ (\bibinfo  {publisher} {Springer New York},\ \bibinfo {year}
  {2011})\BibitemShut {NoStop}%
\end{thebibliography}%


\end{document}